    \newtheorem{definition}[algorithm]{Definition}
    \newtheorem{theorem}[algorithm]{Theorem}
    \newtheorem{claim}[algorithm]{Claim}
    \newtheorem{lemma}[algorithm]{Lemma}
    \newtheorem{proposition}[algorithm]{Proposition}
    \newcommand{\Input}{\textbf{Input:} }
    \newcommand{\Output}{\textbf{Output:} }
    \newcommand{\Z}{\mathbb{Z}}
    \newcommand{\F}{\mathbb{F}}
    \newcommand{\N}{\mathbb{N}}
    \numberwithin{equation}{section}
    \newcommand{\ord}{\mathrm{ord}}
\begin{document}
        \title{On Deterministically Finding an Element of High Order Modulo a Composite}
        \author{Ziv Oznovich}
        \author{Ben Lee Volk}
        \thanks{Efi Arazi School of Computer Science, Reichman University, Israel. Emails: \{\texttt{oznovich.ziv@gmail.com, benleevolk@gmail.com}\}. The research leading to these results has received funding from the Israel Science Foundation (grant number 843/23).}

        \begin{abstract}
    \label{sec:abstract}
    We give a deterministic algorithm that, given a composite number $N$ and a target order $D \ge N^{1/6}$, runs in
    time $D^{1/2+o(1)}$ and finds either an element $a \in \Z_N^*$ of multiplicative order at least $D$, or a nontrivial factor
    of $N$.
    Our algorithm improves upon an algorithm of Hittmeir~\cite{hittmeir2018}, who designed a similar algorithm
    under the stronger assumption $D \ge N^{2/5}$.
    Hittmeir's algorithm played a crucial role in the recent breakthrough deterministic integer factorization algorithms of
    Hittmeir and Harvey \cite{hittmeir2021, Harvey21, HarveyHittmeir22}.
    When $N$ is assumed to have an $r$-power divisor with $r\ge 2$, our algorithm provides the same guarantees
    assuming $D \ge N^{1/6r}$.
\end{abstract}

        \maketitle
        \markboth{\MakeUppercase{\shortauthors}}{\MakeUppercase{Finding an Element of High Order}}
        \section{Introduction}
        \label{sec:introduction}
        While it is generally believed that every randomized algorithm can be simulated by a deterministic algorithm with at
most a polynomial slowdown (and indeed, such a statement is implied by widely believed complexity theoretic
conjectures \cite{NW94, IW97,DMOZ22}), obtaining unconditional derandomization results is still of great interest in
complexity theory and algorithm design.
Computational number theory offers a host of problems in which the best known randomized algorithms significantly
outperform the best known deterministic algorithms, with one prime example being the problem of integer factorization:
the best known randomized algorithms for factoring an integer $N$ with rigorously proved guarantees run in time
$\exp(\tilde{O}(\sqrt{\log N}))$\footnote{Heuristically, the general number field sieve algorithm even runs in time
$\exp(\tilde{O}((\log N)^{1/3}))$, but in this work we focus on algorithms with provable worst case complexity
guarantees.} while the best known deterministic algorithms are still strongly exponential.
For decades, essentially the best known algorithm was the Strassen-Pollard algorithm \cite{Pollard74, Strassen77} that
runs in time $N^{1/4+o(1)}$, and a large body of work was devoted to improving the $o(1)$ term.
Recently, a series of breakthroughs by Hittmeir and Harvey \cite{hittmeir2021, Harvey21, HarveyHittmeir22} resulted in
a deterministic algorithm that runs in time $N^{1/5+o(1)}$.

The algorithms of Hittmeir and Harvey \cite{hittmeir2021, Harvey21, HarveyHittmeir22} all crucially rely on
deterministically finding an element $a \in \Z_N^*$ with large multiplicative order: 
recall that the \emph{order} of an element \(a \in \Z_N^*\), which we denote by \(\ord_N(a)\), is the smallest positive integer \(k\) such
that \(a^k \equiv 1 \pmod{N}\).
Specifically,
\cite{Harvey21, HarveyHittmeir22} require finding an element whose order is at least roughly $N^{1/5}$.

The problem of finding an element of large order is a natural derandomization problem in and of itself: consider, for example, the case where $N=pq$ for primes $p,q$ such that neither $p$ nor $q$
is very small (using standard arguments, Hittmeir and Harvey \cite{hittmeir2021, Harvey21, HarveyHittmeir22} reduce
the general case to this case).
A random element in $\Z_N^*$ is likely to have a large multiplicative order, but it is not immediately clear how to deterministically and efficiently find
 such an element.

This problem was addressed by Hittmeir \cite{hittmeir2018}, who gave an algorithm that, given a composite integer $N$
and a parameter $D\ge N^{2/5}$, runs in time $D^{1/2+o(1)}$, and either finds an element $a \in \Z_N^*$ of
multiplicative order at least $D$, or finds a nontrivial factor of $N$.
It might seem slightly unsatisfactory that the algorithm is not guaranteed to find an element of large order. But 
 in the context of being used as a subroutine in a factorization algorithm, this algorithm leads to a natural win-win
type argument: either such an element is found, or a nontrivial factor of $N$ is found.
Furthermore, there exist infinitely many integers $N$ such that the order of any element modulo $N$ is at most
$(\log N)^{O(\log \log \log N)}$ \cite{EPS91}, so in the general case, an element of such a large order does not always exist.

In the factorization algorithms in \cite{Harvey21, HarveyHittmeir22}, Hittmeir's algorithm \cite{hittmeir2018} is
applied with $D=N^{2/5}$, so it runs in time $N^{1/5+o(1)}$, barely fast enough so as not to hurt the total running
time of the algorithm (as the other steps in the algorithm also take time $N^{1/5+o(1)}$). If it does not factor $N$, it produces an
element whose order is at least $D=N^{2/5}$, whereas the next steps of the factorization algorithm only require an element of order
at least roughly $N^{1/5}$.

As a final remark, we note that when Hittmeir's algorithm \cite{hittmeir2018} is given as an input a prime number $N$,
it will not necessarily output an element of larger order modulo $N$, and it may just detect the fact that $N$ is
prime (which is, of course, famously decidable in deterministic polynomial time \cite{AKS04}).

When $N=p$ is prime, $\Z_p^*$ is cyclic, and deterministically finding a generator for this group is another very
interesting open problem.
To the best of our knowledge, the best known deterministic algorithm for this problem follows from using the
factorization algorithms of \cite{Harvey21, HarveyHittmeir22} to obtain a factorization of $p-1$ as
$p-1=\prod_{i=1}^m q_i^{e_i}$, and then, using brute-force search, finding $q_i$-th non-residue $a_i$ modulo $p$ for every
$q_i$, and outputting $\prod_{i=1}^m a_i^{(p-1)/q_i^{e_i}}$.
Classical upper bounds on the magnitude of the least $q$-th non-residues \cite{Burgess57, Wang64, Norton71, Trevino15}
imply that the time complexity of the algorithm is dominated by the factorization step, so the total running time is
$p^{1/5+o(1)}$.
In \cite{Grossman15}, Grossman presents a \emph{pseudo-deterministic} algorithm for finding a generator modulo $p$ in time $\exp(\tilde{O}(\log p))$ (a pseudo-deterministic algorithm is a randomized algorithm that on every input, is guaranteed to output some canonical output with high
probability).

\subsection{Our results}\label{subsec:our-results}

We give a deterministic algorithm for finding an element $a \in \Z_N^*$ of large order that works for a wider range of
parameters than Hittmeir's algorithm \cite{hittmeir2018}.

\begin{theorem} \label{int:main-algorithm}
There exists a deterministic algorithm that, when given as input a composite integer $N\in\mathbb{N}$ and a target order
$D \ge N^{1/6}$, runs in time $D^{1/2+o(1)}$ and outputs either
a nontrivial factor of $N$ or an element $a \in \Z_N^*$ of multiplicative order at least $D$.
\end{theorem}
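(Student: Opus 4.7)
The plan is to combine a batched baby-step giant-step (BSGS) routine with an extraction step that turns any failure of the routine into either a nontrivial factor of $N$ or an element of order $\ge D$. The single-candidate subroutine is standard: for any $a \in \Z_N^*$, BSGS in time $D^{1/2+o(1)}$ either certifies $\ord_N(a) \ge D$ (and we output $a$) or returns the exact value $d = \ord_N(a) < D$ together with the relation $a^d \equiv 1 \pmod N$.

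To amortize across many candidates within the same time budget, I would follow the Strassen--Pollard template of fast polynomial evaluation: precompute the polynomial $P(x) = \prod_{i=0}^{\lceil \sqrt{D} \rceil} (x - g^i) \bmod N$ for a fixed base $g$, then evaluate $P$ at giant-step locations of the form $h_k \cdot g^{j \lceil \sqrt{D} \rceil}$ for auxiliary elements $h_k$ and $j \in \{0, 1, \ldots, \lceil \sqrt{D} \rceil\}$. Each evaluation simultaneously tests one $h_k$ against all baby steps of $g$; the GCD of each evaluation with $N$ either reveals a factor or certifies (or refutes) that $\ord_N(h_k) \le D$. This lets us test roughly $\sqrt{D}$ candidates $h_k$ within total time $D^{1/2+o(1)}$.

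For each candidate whose order $d < D$ is recovered in this fashion, I would apply the Miller--Rabin-style trick: write $d = 2^s t$ with $t$ odd and compute $\gcd(a^{2^i t} - 1, N)$ for $i = 0, \ldots, s-1$, which splits $N$ whenever the orders modulo the distinct prime-power factors of $N$ have differing $2$-adic valuations; analogous tricks apply with other small primes dividing $d$. The main obstacle, and where the threshold $D \ge N^{1/6}$ should enter, is the adversarial case where all such GCD extractions fail. This forces the orders modulo the different prime factors of $N$ to be highly aligned, which imposes strong algebraic constraints; the goal is then to show that a suitable multiplicative combination of the collected small-order elements must have order exceeding $D$, or else $N$ has enough structure to be factored directly. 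The bound $D \ge N^{1/6}$ presumably emerges from balancing the number of aligned relations one can collect in the $D^{1/2+o(1)}$ budget against the size of the prime factors of $N$, while the analogous bound $D \ge N^{1/6r}$ when $N$ has an $r$-power divisor should follow from an additional Hensel-lifting step exploiting the extra multiplicative structure inside the $p$-adic units.
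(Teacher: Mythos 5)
Your overall iteration template --- repeatedly run a BSGS-style order test and then attempt to split $N$ via $\gcd(a^{d/q}-1,N)$ for primes $q \mid d$ --- does match the skeleton the paper inherits from Hittmeir, and the observation that failure of every such GCD imposes strong alignment between $\ord_p(a)$ across primes $p \mid N$ is correct (this is Lemma 2.3 of the paper). But the two ingredients that actually move the threshold from Hittmeir's $N^{2/5}$ down to $N^{1/6}$ are absent from your sketch. First, the paper does not batch BSGS at all, and the batching is not the right tool: the algorithm maintains $M_i = \mathrm{lcm}$ of the orders found so far, each successful round at least doubles $M_i$, so only $O(\log D)$ BSGS calls occur. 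What does need to be controlled is the inner scan for the next $a$ with $a^{M_i}\not\equiv 1 \pmod N$. Hittmeir bounded this scan by $M_i$; the paper proves, via a lemma of Forbes, Kayal, Mittal and Saha on common roots of systems $\{(x+a_i)^\ell - 1\}$ over $\F_p$, that among any $O(\sqrt{M_i})$ consecutive integers some $b$ satisfies $b^{M_i}\not\equiv 1$, which is what lets $M_i$ be pushed up to $D$ rather than $\sqrt{D}$. Your Strassen--Pollard polynomial-evaluation batching does not serve this purpose, and as written it also tests the wrong thing: evaluating $P(x)=\prod_i (x-g^i)$ at $h_k g^{j\lceil\sqrt{D}\rceil}$ detects whether $h_k$ lies in $\langle g\rangle$ modulo some prime factor, not whether $\ord_N(h_k)\le D$, and the number of evaluations ($\sqrt{D}$ candidates times $\sqrt{D}$ giant steps each) does not amortize to $D^{1/2+o(1)}$.

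Second, and more fundamentally, the terminal step is missing. After the loop, the algorithm knows every prime $p\mid N$ satisfies $p\equiv 1 \pmod M$ for some $M\ge D\ge N^{1/6}$ coprime to $N$. Your proposal hopes that ``a suitable multiplicative combination of the collected small-order elements must have order exceeding $D$'', but this cannot be salvaged: there are infinitely many $N$ for which \emph{no} element of $\Z_N^*$ has order larger than $(\log N)^{O(\log\log\log N)}$ (Erd\H{o}s--Pomerance--Schmutz, cited in the paper), so for such $N$ the algorithm is forced to factor. The paper does this with a new lattice/Coppersmith-type algorithm --- a variant of Harvey--Hittmeir's $r$-power-divisor algorithm adapted to exploit the residue constraint --- that, given $s\ge N^\alpha$ coprime to $N$, finds all ($r$-power) divisors $p\equiv 1 \pmod s$ in time $N^{1/4r-\alpha+o(1)}$. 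Taking $r=1$, $\alpha=1/6$ gives $N^{1/12+o(1)}\le D^{1/2+o(1)}$, and this is the sole source of the $N^{1/6}$ threshold. The $r$-power improvement to $D\ge N^{1/6r}$ likewise comes directly from the $1/4r$ in that lattice bound (choose $\alpha = 1/6r$), not from any Hensel lifting. Without some concrete mechanism for turning the residue information into a factorization within the time budget, the proof does not close.
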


Our algorithm can be used as-is as a subroutine in the factorization algorithms of Hittmeir and Harvey
\cite{hittmeir2021, Harvey21, HarveyHittmeir22} to produce an element of order at least, say $N^{1/5}$ (which is a
large enough order for the rest of the algorithm's analysis to go through) in time $N^{1/10+o(1)}$.
As mentioned earlier, in the algorithms of \cite{hittmeir2021, Harvey21, HarveyHittmeir22} the running time of this
step is $N^{1/5+o(1)}$.

Like Hittmeir's algorithm \cite{hittmeir2018}, our algorithm might not produce any useful output if
given as input a prime number $N$ (i.e., it might just output the fact that $N$ is prime).

For integers that have a prime factor with multiplicity $r$, we can even obtain an algorithm with the same running time of $D^{1/2+o(1)}$ under the weaker assumption $D \ge N^{1/6r}$.

In a recent independent and concurrent work, Gao, Feng, Hu and Pan~\cite{cryptoeprint:2025/1004} presented a similar
deterministic algorithm for finding an element of large order modulo composite that runs in time $D^{1/2+o(1)}$ under
the stronger assumption $D\geq N^{1/4r}$.
We remark further on their techniques in the~\hyperref[subsec:proof-technique]{next subsection} and in~\cref{sec:factoring-N-given-residue}.
The paper~\cite{cryptoeprint:2025/1004} further includes improved algorithms (by logarithmic factors) for deterministic
factorization in certain important special cases.

\subsection{Proof Technique}\label{subsec:proof-technique}

We begin with a broad overview of Hittmeir's algorithm \cite[Algorithm 6.2]{hittmeir2018}.

In a high level overview, one can think of Hittmeir's algorithm as performing a brute force search for an element of
high order, starting from $a=2,3,\ldots$ Given a specific element $a \in \Z_N^*$, one can deterministically check in time
$O(\sqrt{D}) \cdot \log^{O(1)}N$ whether the order of $a$ is at least $D$, using a simple variant on Shanks's~\cite{Shanks}
famous Baby-Step Giant-Step method for solving discrete logarithm: indeed, one can compute (and store) the list of
elements $a^j$ for $0 \le j\le \lceil \sqrt{D} \rceil -1$ (these are the ``baby steps'') and the list of elements $a^{-i\lceil \sqrt{D} \rceil}$
for $0 \le i\le \lceil \sqrt{D} \rceil -1$ (these are the ``giant steps''), sort the lists and check for a collision in linear time (in
the lengths of the lists).
If indeed $a^k \equiv 1 \bmod N$ for some $k \le D$, then by writing $k=i \lceil \sqrt{D} \rceil + j$ for $0 \le i,j\le \lceil \sqrt{D} \rceil -1$, we
are guaranteed to find a collision, and in fact in this case we also know the exact order of $a$.
If a collision is not found, then the order of $a$ is greater than $D$ (see \cref{thm:order-calculation-alg} for a
small improvement that shaves off a $\sqrt{\log \log D}$ factor).

Going back to the problem of finding an element of large order, Hittmeir's algorithm maintains a variable $M$ that
equals the least common multiple of the orders of the elements checked so far.
It then does a brute force search for the next element $a$ whose order is not divisible by $M$.
If the order of $a$ is at least $D$, we are done.
Otherwise, we can try to use the fact that the order of $a$ modulo $N$ is small to try to obtain information about the
factors of $N$, and update the variable $M$ as the least common multiple of its previous value and the order of $a$.

The value of $M$ grows by a factor of at least 2 in each iteration.
After $O(\log D)$ iterations, we are at the situation where unless that algorithm already terminated with a correct
output, we know that every prime factor $p$ of $N$ equals $1$ modulo $M$, where $M \ge \sqrt{D}$.
At this point, Hittmeir invokes a variant of Strassen's algorithm \cite{Strassen77} which factors $N$ in time roughly
$\frac{N^{1/4}}{\sqrt{M}}$ (see \cref{subsec:strassens-method} for details).

We improve on this method in several ways.
Our first improvement comes from a better upper bound on the number of consecutive elements $a, a+1,a+2,\ldots \in \Z_N^*$
whose multiplicative order is divisible by $M$. Hittmeir \cite{hittmeir2018} upper bounds this number by $M$, but we
prove an upper bound of $O(\sqrt{M})$, which allows us to increase $M$ up to roughly $D$ (rather than $\sqrt{D}$)
without hurting the running time.

Our second contribution comes from replacing the final factorization step of the algorithm.
The question of finding factors of a number given information about their residue classes has received some attention.
Coppersmith, Howgrave-Graham and Nagaraj \cite{CHGN08} designed a remarkable, efficient and deterministic algorithm
specifically for that problem: given an integer $N$ and a parameter $0\le t<s < N$ such that $s \ge N^{\alpha}$ for
 $\alpha > 1/4$ and $\gcd(t,s)=\gcd(s,N)=1$, their algorithm finds all divisors of $N$ that equal $t$ modulo $s$
in deterministic polynomial time.
Their algorithm even provides an upper bound for the number of such factors that depend only on $\alpha$, although
this is less relevant for our application (such an upper bound was also proved earlier by
Lenstra \cite{lenstra1984divisors}). The algorithm of \cite{CHGN08} is based on Coppersmith's method for finding small
roots of low degree univariate modular polynomials \cite{coppersmith1997}, using the basis reduction algorithm of
Lenstra, Lenstra and Lov\'{a}sz \cite{LLL82}.

At the last step of our algorithm we know that every prime factor $p$ of $N$ equals $1$ modulo $M$ and $M \ge D$.
Assuming $D \ge N^{1/4+\varepsilon}$ we can use the factorization of \cite{CHGN08} to recover the prime factors of $N$.

However, we can even improve the parameters a bit further.
Since the other steps of our algorithm already run in time $D^{1/2+o(1)}$, we can allow the factorization step to run
for longer than polynomial time.
To that end, we adapt the technique of Coppersmith et al.\ \cite{CHGN08} to a recent deterministic algorithm of Harvey
and Hittmeir~\cite{harvey2022deterministic} for finding all \(r\)-th power divisors of an integer \(N\).
Their algorithm also builds on Coppersmith’s method~\cite{coppersmith1997}, which they adapt to exploit the fact that
$N$ has an $r$-th power divisor, enabling them to find all such divisors in time $N^{1/4r}$.

We extend their approach under the assumption that the residue class of the prime divisors of \(N\) modulo some integer
\(s\) is known.
Intuitively, the assumption that $N$ has a prime divisor in a specific residue class modulo $s$ should reduce the
search space by a factor of $s$.
Indeed, we show that this intuition can be realized and present a variant of their algorithm (and in a sense also
of \cite{CHGN08}) that, given $N$ and $s \ge N^{\alpha}$ coprime to $N$, finds all $r$-th power divisors of \(N\) that equal
1 modulo $s$ in time $N^{1/4r - \alpha +o(1)}$.
Quite surprisingly, it seems that the approach of Harvey and Hittmeir~\cite{harvey2022deterministic} gives better
dependence on $\alpha$, even when $r=1$, rather than trying to directly extend \cite{CHGN08}.

We now give an overview of the algorithm of Harvey and Hittmeir~\cite{harvey2022deterministic}, and explain our
extensions.

The main idea in the algorithm (and, for that matter, in the algorithm of Coppersmith et al.\ \cite{CHGN08} as well)
is to search for divisors of $N$ in various intervals, by constructing a low-degree univariate integer polynomial
$h(x)$ such that the integer roots of $h$ correspond to potential divisors of $N$.

To search a given interval of the form $[P-H, P+H]$, Harvey and Hittmeir construct the system of polynomials
\[
    f_i(x) =
    \begin{cases}
        N^{m - \lfloor i/r \rfloor}(P + x)^i, & 0 \le i < rm, \\
        (P + x)^i, & rm \le i < d.
    \end{cases}
\]
(where \(m\) and \(d\) are parameters that are later set to optimize the running time of the algorithm).
These polynomials are designed so that they vanish modulo \(p^{rm}\) at \(x_0 = p - P\), assuming \(p^r \mid N\).

The LLL algorithm~\cite{LLL82} is applied to find a short vector in the lattice spanned by those polynomials: that
vector corresponds to a polynomial \(h(x)\) with small coefficients that vanishes at \(x_0 = p - P\) modulo $p^{rm}$.
Since $|x_0| \le H$, then assuming $H$ is small enough, $|h(x_0)| < p^{rm}$, which implies that $x_0$ is a root of $h$
over the integers.

Harvey and Hittmeir then cover the entire search range $[1,N^{1/r}]$ by such shorter intervals, and since the above
procedure runs in polynomial time, the dominant contribution to the algorithm’s running time arises from the number
of subintervals \([P - H, P + H]\) that must be processed.

In our case, it is known that the residue class of the prime factors of \(N\) modulo some integer \(s\) is 1.
Namely, we know that \(p \equiv 1 \pmod{s}\) for all prime factors \(p\) of \(N\).
This reduces the search space by a factor of \(s\), making it a natural goal to modify the algorithm 
 accordingly and aim for an improvement by a factor of \(s\) in running time.

In Harvey and Hittmeir's algorithm, \cite{harvey2022deterministic}, the basic polynomial of interest is \( f(x)=x+P\), where
\(P\) is the center of the current search interval \([P-H, P+H]\).
The two important properties of this polynomial are:
\begin{itemize}
    \item It is monic.
    This property is used to bound the lattice determinant, which is crucial for the
    efficiency of the algorithm, as the maximal possible size of $H$ depends on this determinant, and we would like to
    maximize $H$ in order to reduce the number of search intervals.
    \item If \(p\) is an \emph{r}-th power divisor of \(N\) ``close'' to $P$, then \(x_0=p-P\), is a ``small'' root of $f(x)$ modulo $p$. 
    This fact remains true for powers and multiples of $f(x)$, and integer linear combinations thereof, hence enabling them (using the LLL basic reduction algorithm) to find \(p\).
\end{itemize}

The key idea of our variant is to replace the polynomial \(P + x\) with a new polynomial that, in some sense, retains
all the useful properties of the original construction.
The fact that the residue class of \(p\) modulo \(s\) is known allows us to
reduce the size of the root we need to recover, from $H$ to roughly $H/s$.
This enables us to increase the size of the interval in which we perform a single search by a factor of $s$, which, in
turn, reduces the number of subintervals required by a factor of $s$, and improves the overall running time.

We remark that our algorithm, with straightforward modifications, works equally well for finding divisors in other residue
classes modulo $s$ (i.e., divisors that equal $t$ modulo $s$ for some $0 \le t \le s-1$). Since this is less useful
in our context, we omit the details.

A similar algorithm with the same running time that uses very similar techniques was recently independently obtained
by~\cite{cryptoeprint:2025/1004}, in a similar context.

\subsection{Organization}
\label{subsec:organization}

In \cref{sec:preliminaries}, we cite relevant prior results and prove some preliminary facts that are useful for our main
algorithm.
In \cref{sec:factoring-N-given-residue}, we give the algorithm for finding factors of \(N\)
given their residue class modulo some integer \(s\).
The main algorithm is given in \cref{sec:main-algorithm}.
We conclude with some open problems in \cref{sec:open}.
        \section{Preliminaries}
        \label{sec:preliminaries}
        In this section, we cite and prove some preliminary results.
Sections \ref{subsec:order-calculation-algorithm}, \ref{subsec:strassens-method},
\ref{subsec:order-of-all-prime-divisors} and \ref{subsec:lll} summarize important subroutines we use in our algorithms.
In \cref{subsec:number-of-consecutive-roots} we prove some new results that we rely on for our improvement.

\subsection{Order Calculation Algorithm}
\label{subsec:order-calculation-algorithm}

As discussed in \cref{sec:introduction}, a well known theorem of Shanks allows one to test, for each integer $a$,
whether the order of $a$ modulo $N$ exceeds $D$, and if it does not, to compute it exactly.
A small improvement over Shanks' method was given by \cite[Algorithm 4.1]{sutherland2007order} (see also
\cite[Theorem 6.1]{hittmeir2018}).
\begin{theorem}[\cite{sutherland2007order}]
    \label{thm:order-calculation-alg}
    There exists an algorithm that on input $D,N \in \N$ such that $D \le N$ and $a \in \Z_N^*$, runs in time
    $O\left(\frac{D^{1/2}}{\sqrt{\log \log D}} \cdot \log^2 N \right)$,
    and outputs:
    \begin{itemize}
        \item $\ord_N(a)$ if $\ord_N(a) \leq D$
        \item ``\( \ord_N(a) > D \)'' otherwise.
    \end{itemize}
\end{theorem}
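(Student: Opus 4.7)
The plan is to implement Shanks's Baby-Step Giant-Step method directly and then invoke a refinement to shave off the $\sqrt{\log\log D}$ factor. Set $s = \lceil \sqrt{D}\rceil$. First I would compute $a^{-1} \bmod N$ with the extended Euclidean algorithm (which runs in $O(\log^2 N)$ time and uses the hypothesis $a \in \Z_N^*$), and then build two lists of residues modulo $N$,
\[
B = (a^j \bmod N)_{0 \le j < s}, \qquad G = (a^{-is} \bmod N)_{0 \le i \le s},
\]
each produced by $O(s)$ modular multiplications for a cost of $O(s \log^2 N)$ bit operations.

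After sorting $B$ by integer representative, I would scan $G$ in order of increasing $i$ and binary-search each entry in $B$. A match $a^{-is} \equiv a^j \pmod N$ certifies $a^{is+j} \equiv 1 \pmod N$, and since the pair $(i,j) \in \{0,\dots,s\}\times\{0,\dots,s-1\}$ makes $is+j$ range over $[0, s^2+s-1] \supseteq [0,D]$, the smallest positive $is+j$ for which a match occurs is exactly $\ord_N(a)$ provided $\ord_N(a) \le D$. If no positive collision is ever found, the output is ``$\ord_N(a) > D$''. Sorting contributes only an extra $\log D$ factor, yielding $O(\sqrt{D}\,\log^2 N)$ bit operations overall. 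This already gives the classical Shanks bound.

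The extra $\sqrt{\log\log D}$ factor comes from Sutherland's refinement \cite[Algorithm 4.1]{sutherland2007order}, whose core observation is that plain BSGS is only tight when $\ord_N(a)$ is near $D$ and has no small prime factors. The strategy is to first detect small prime factors of $\ord_N(a)$ by running shorter BSGS computations on $a^{p}$ for a carefully chosen collection of small primes $p$, remove each detected factor, and recurse on the reduced bound. Combining this with the standard estimate that only an $O(1/\log\log D)$ fraction of integers up to $D$ are free of small prime factors gives the claimed saving. I would cite this rather than reprove it.

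I expect the genuinely nontrivial part of a from-scratch proof to be the worst-case (rather than heuristic) analysis of the recursive step in Sutherland's refinement, in particular choosing the threshold on ``small'' primes so that both the prime-detection phase and the residual BSGS on the largest prime factor simultaneously run within the target budget. The BSGS portion itself, together with verifying that the smallest positive $is+j$ found is exactly the order (and not merely a multiple of it), is entirely routine.
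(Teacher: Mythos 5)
Your proposal matches the paper's treatment: \cref{thm:order-calculation-alg} is imported as a black box from Sutherland (the paper also points to \cite[Theorem~6.1]{hittmeir2018}), and you likewise defer the $\sqrt{\log\log D}$ refinement to that reference after correctly laying out the plain BSGS and the argument that the smallest positive $is+j$ with a collision equals $\ord_N(a)$. Your informal sketch of how Sutherland removes small prime factors of the order (running BSGS on $a^p$ and recursing) is a bit loose compared to the actual primorial-steps mechanism, but since you cite rather than reprove it, this is not a defect.
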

\subsection{Strassen's Method}
\label{subsec:strassens-method}

As part of the main algorithm we wish to check if \(N\) has relatively ``small'' prime divisors, and if so, find them.
Also, when a ``small'' order element \(a_i\) is found, we wish to factor \(\operatorname{ord}_N(a_i)\).
For these purposes we use the method of Pollard and Strassen \cite{Pollard74, Strassen77} for finding small factors
of \(N\).
(See also \cite[Proposition 2.5]{Harvey21} for the exact statement on the running time).
\begin{theorem}[\cite{Pollard74, Strassen77}]
    \label{thm:strassens-method}
    There exists an algorithm that on input $L,N \in \N$ such that $L \le N$, tests if \( N \) has a prime divisor
    \( p \leq L \), and if so, finds the smallest such divisor, and runs in time \( O(L^{1/2}\log^3N) \).
\end{theorem}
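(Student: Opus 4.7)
The plan is to use the classical Pollard--Strassen approach, which reduces the search to fast polynomial multipoint evaluation over $\Z_N$. Set $K = \lceil L^{1/2}\rceil$ and let $f(x) = \prod_{i=1}^{K}(x+i) \in \Z_N[x]$, so that $f(jK) \equiv \prod_{i=jK+1}^{(j+1)K} i \pmod N$ is the product modulo $N$ of the $K$ consecutive integers in the $j$-th block $B_j := [jK+1,(j+1)K]$. Any prime divisor $p \le L$ of $N$ lies in some $B_j$ with $j < K$, and then $\gcd(f(jK), N) > 1$ detects that block.

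The key computational steps are, first, to construct $f$ via a subproduct tree over the $K$ linear factors $(x+1), \ldots, (x+K)$, which using FFT-based polynomial multiplication costs $\tilde{O}(K)$ ring operations in $\Z_N$. Second, apply fast multipoint evaluation to compute $f(jK)$ for all $0 \le j < \lceil L/K\rceil$, again in $\tilde{O}(K)$ ring operations. Each operation in $\Z_N$ costs $\tilde{O}(\log N)$ bit operations, so the total cost of obtaining the full list of block products is $\tilde{O}(L^{1/2} \log N)$.

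Finally, scan the block products in increasing $j$ to find the first one with $\gcd(f(jK), N) > 1$; this block contains the smallest prime divisor of $N$ bounded by $L$, if one exists. Within that block, iterate over $i \in B_j$ and return the smallest $i$ with $\gcd(i, N) > 1$; a short argument using the minimality of $j$ shows that such $i$ must itself be prime, since any smaller prime factor of a composite candidate would already have been detected in an earlier block. This localization step adds only $\tilde{O}(L^{1/2} \log^2 N)$. The only real obstacle is the bookkeeping of polylogarithmic factors to tighten the bound to the stated $O(L^{1/2}\log^3 N)$; this follows from standard sharp estimates for integer and polynomial arithmetic modulo $N$, and matches the refined accounting given in \cite[Proposition 2.5]{Harvey21}.
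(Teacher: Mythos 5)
Your proposal is correct and follows the standard Pollard--Strassen argument via subproduct trees and fast multipoint evaluation; the paper does not prove this theorem itself but cites \cite{Pollard74, Strassen77} (and \cite{Harvey21} for the precise running-time accounting), and your sketch is exactly the argument those references give. The block-localization claim is sound: if $j$ is minimal with $\gcd(f(jK),N)>1$, every integer in earlier blocks is smaller than the least prime factor $p$ of $N$, so $p$ lies in $B_j$ and is the least $i\in B_j$ with $\gcd(i,N)>1$.
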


Note that by setting \(L \coloneqq \sqrt{N}\), this method can be used to completely factor \(N\), since any composite
\(N\) must have a divisor \(p \leq \sqrt{N}\).

\subsection{Order of All Prime Divisors}
\label{subsec:order-of-all-prime-divisors}

We state an important lemma of Hittmeir (\cite[Lemma 2.3]{hittmeir2018}) that allows us to deduce, in certain
cases, the order of $a$ modulo prime factors of $N$.

\begin{lemma}[\cite{hittmeir2018}]
    \label{lem:order-all-primes}
    Let \( N \in \N \) and $a \in \Z_N^*$. Let \( m := \ord_N (a) \).
    Then
    \[
        \ord_p(a) = m \quad \text{for every prime } p \text{ dividing } N
    \]
    if and only if
    \[
        \gcd(N, a^{m/r} - 1) = 1 \quad \text{for every prime } r \text{ dividing } m.
    \]
\end{lemma}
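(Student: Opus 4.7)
The plan is a direct two-direction argument, where both implications follow from the basic fact that if $p \mid N$ then $a^m \equiv 1 \pmod{p}$, and hence $\ord_p(a)$ always divides $m$. The goal in each direction is essentially to detect whether the divisibility is strict.

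For the forward direction, I would assume $\ord_p(a) = m$ for every prime $p \mid N$ and argue by contradiction. Suppose $\gcd(N, a^{m/r} - 1) > 1$ for some prime $r \mid m$. Pick any prime $p$ dividing this gcd. Then $p \mid N$ and $a^{m/r} \equiv 1 \pmod p$, so $\ord_p(a) \le m/r < m$, contradicting the hypothesis $\ord_p(a) = m$.

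For the reverse direction, I would assume the gcd condition holds and show the order statement again by contradiction. Fix a prime $p \mid N$ and let $d = \ord_p(a)$; since $a^m \equiv 1 \pmod p$, we have $d \mid m$. Suppose $d < m$, so that $m/d > 1$ has some prime divisor $r$. From $r \mid m/d$ I get $d \mid m/r$, which yields $a^{m/r} \equiv 1 \pmod p$. But then $p \mid \gcd(N, a^{m/r}-1)$, contradicting the assumption that this gcd equals $1$. Hence $d = m$.

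There is no real obstacle here; the lemma is a clean elementary number-theoretic observation. The only point that requires slight care is the divisibility manipulation in the reverse direction: choosing $r$ as a prime divisor of $m/d$ (rather than of $m$) is what guarantees $d \mid m/r$, which is exactly what is needed to conclude $a^{m/r} \equiv 1 \pmod p$.
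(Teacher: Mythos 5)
Your proof is correct. Note that the paper does not actually supply a proof of this lemma; it is cited verbatim from Hittmeir's earlier work (\cite{hittmeir2018}, Lemma 2.3). Your two-direction argument is the standard, natural proof of this elementary fact, and it matches the argument in Hittmeir's paper in substance. You correctly identify the one small subtlety: in the reverse direction you must pick a prime $r$ dividing $m/d$ (not merely dividing $m$), since this is what guarantees $d \mid m/r$ and hence $a^{m/r} \equiv 1 \pmod{p}$. Such an $r$ is indeed a prime divisor of $m$ as required, since $m/d \mid m$. No gaps.
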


\subsection{Lattice Basis Reduction}
\label{subsec:lll}
An important tool that we use in \cref{sec:factoring-N-given-residue} is the famous LLL \cite{LLL82}
lattice basis reduction algorithm.

\begin{lemma}[\cite{LLL82}]
\label{lem:LLL}
    Let \( d \geq 1\) and \(B \geq 2\). Given a set of linearly independent vectors \(v_0, \dots,v_{d-1}\in \mathbb{Z}^d\)
    such that \( \lVert v_i \rVert \leq B\) for all \(i\), there exists a deterministic algorithm that runs in time polynomial in \(d\) and \(\log B\)
    and returns a nonzero vector \(w\) in the lattice \(L \coloneqq \operatorname{span}_{\mathbb{Z}}(v_0,\dots,v_{d-1})\) such that
    \[
        \lVert w \rVert \leq 2^{(d-1)/4} \left( \det L \right) ^{1/d}
    \]
\end{lemma}

\subsection{Number of Consecutive Roots}
\label{subsec:number-of-consecutive-roots}
In this section we prove a bound on the number of consecutive integers \(a\) that satisfy the equality
\(a^{\ell} \equiv 1 \pmod{N}\).
\cite{hittmeir2018} obtained a bound of $\ell$ using the observation that if $a^\ell=1 \bmod N$, then the same holds for every
$p$ that divides $N$, and a degree $\ell$ polynomial over the field $\mathbb{F}_p$ has at most $\ell$ roots.
We prove an improved upper bound of $O(\sqrt{\ell})$.
In fact, our proof also uses the fact that $a^\ell$ cannot be equal to $1$ modulo a prime $p$ for too many consecutive
integers.
Moreover, if this equality fails modulo a prime divisor $p$ of $N$, it certainly cannot be true modulo $N$.
Our proof uses the following lemma, due to Forbes, Kayal, Mittal and Saha, on the number of solutions for a set of
polynomial equations of a certain form modulo $p$.

\begin{lemma}[\cite{forbes2011}, Lemma 2.1]
    \label{lem:fkms}
    Let $k,\ell$ be positive integers such that \(k\leq \frac{2}{\sqrt{5}} \cdot \sqrt{\ell}+1\) and $p$ be a prime such
    that $p>2\ell$.
    Let \(S=\{(x+a_i)^\ell-\theta_i\}_{1\leq i\leq k}\) be a system of univariate polynomials, where
    \(\theta_i,a_i\in \mathbb{F}_p\) and the \(a_i\)'s are distinct.
    Then, the number of common roots of the polynomials in $S$ is at most \(2\ell/(k-1)+3\).
\end{lemma}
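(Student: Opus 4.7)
The plan is to bound the number $R$ of common roots of the polynomials $h_i(x) = (x + a_i)^\ell - \theta_i$ by studying the ideal $I = (h_1, \ldots, h_k) \subseteq \mathbb{F}_p[x]$. Since $\mathbb{F}_p[x]$ is a PID, $I = (g)$ for some monic $g$, and $R \le \deg g$, so it suffices to upper bound $\deg g$.

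First I would dispose of the degenerate case: if some $\theta_i = 0$, then $h_i$ has the unique root $x = -a_i$, forcing $R \le 1$, so assume all $\theta_i \neq 0$. Next, pairwise differences $h_i - h_j$ have their $x^\ell$ terms cancel, leaving a polynomial of degree exactly $\ell - 1$ with leading coefficient $\ell(a_i - a_j)$, nonzero since $p > 2\ell$ and the $a_i$ are distinct. Iterating this cancellation via the divided-difference identity produces the polynomial $\sum_{i=1}^k h_i(x)/\prod_{j \neq i}(a_i - a_j) \in I$, of degree $\ell - k + 1$. This already yields $R \le \ell - k + 1$, which is the best bound obtainable from pure $\mathbb{F}_p$-linear combinations of the $h_i$.

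The hard part is obtaining the improved bound $R \le 2\ell/(k-1) + 3$, which is tighter in the regime $k = O(\sqrt{\ell})$ relevant to the lemma; the naive linear-combination bound is weaker here. The structure to exploit is multiplicative: fixing one common root $\alpha_0$, every other common root has the form $\alpha_0 + \delta$ where $\delta \in \mathbb{F}_p$ satisfies $1 + \delta \cdot (\alpha_0 + a_i)^{-1} \in \mu_\ell$ for all $i$, where $\mu_\ell \subseteq \mathbb{F}_p^*$ is the group of $\ell$-th roots of unity. Thus $R$ equals the size of the intersection of $k$ affine translates of $(\alpha_0 + a_i)(\mu_\ell - 1)$. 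The denominator $k-1$ in the target bound suggests a pairwise-accounting argument: for each ordered pair of distinct common roots $(\alpha, \alpha')$ and each index $i$, the ratio $(\alpha + a_i)/(\alpha' + a_i)$ is a nontrivial element of $\mu_\ell$, and a Bezout- or incidence-style argument should show that these ratios cannot repeat too often, producing the division by $k-1$ after summing over $i$. The additive constant $3$ is likely a small correction absorbing degenerate cases (for example, vanishing denominators or coincidences with $\zeta = 1$). This refined combinatorial/geometric step is where I expect the bulk of the work to lie, and it is the main reason I would ultimately defer to the original argument in \cite{forbes2011} rather than attempt to reconstruct every detail here.
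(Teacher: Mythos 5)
The paper does not prove \cref{lem:fkms}; it imports it verbatim from \cite{forbes2011} (Lemma~2.1 there) and uses it as a black box in the proof of \cref{cl:consecutive}. There is therefore no in-paper argument to compare your sketch against, and deferring to the original source --- as you explicitly do at the end --- is the appropriate move here.

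A few remarks on the sketch itself, for what they are worth. Your divided-difference warm-up is sound: the $(k-1)$st divided difference in the shift parameter $a$ of $(x+a)^\ell$, namely $\sum_{i=1}^k h_i(x)/\prod_{j\neq i}(a_i - a_j)$, has degree exactly $\ell-k+1$ in $x$ with leading coefficient $\binom{\ell}{k-1}$, which is nonzero in $\F_p$ because $k-1$ and $\ell-k+1$ sum to $\ell < p$ without carry in base $p$; this gives $R \le \ell-k+1$. You are also right that this purely linear-algebraic bound is far weaker than the target $2\ell/(k-1)+3$ in the regime $k=\Theta(\sqrt{\ell})$ to which the lemma applies, and the observation that $(\alpha+a_i)/(\alpha'+a_i) \in \mu_\ell$ for every pair of distinct common roots $\alpha,\alpha'$ does capture the multiplicative structure one would want to exploit. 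But the essential step --- turning that observation into a bound inversely proportional to $k-1$ --- is exactly the part you leave unfilled, gesturing at an unspecified ``Bezout- or incidence-style argument.'' So the proposal is an honest partial reconstruction, not a proof; since the paper itself offers no proof either, anyone wanting the argument must consult \cite{forbes2011}.
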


We now present and prove the main result of \cref{subsec:number-of-consecutive-roots}.

\begin{claim}
    \label{cl:consecutive}
    Let $N,\ell$ be large enough positive integers. Suppose $N$ has a prime factor $p$ such that $p>2\ell$. Let $m=10 \sqrt{\ell}$
    and $A=\{a,a+1,\ldots,a+m\}$ be a set of $m$ consecutive integers.
    Then $A$ contains an element $b$ such that $b^\ell \not\equiv1 \bmod N$.
\end{claim}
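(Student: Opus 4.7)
The plan is to argue by contradiction. Suppose every $b \in A$ satisfies $b^\ell \equiv 1 \pmod{N}$; since $p \mid N$, we get $b^\ell \equiv 1 \pmod{p}$ for every $b \in A$. The goal is to extract enough common roots of a carefully chosen system of polynomials to exceed the bound in \cref{lem:fkms}.

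Let $k := \lfloor \tfrac{2}{\sqrt{5}}\sqrt{\ell} \rfloor + 1$, so that the FKMS hypothesis $k \le \tfrac{2}{\sqrt{5}}\sqrt{\ell}+1$ holds, and set $a_i = i-1$ and $\theta_i = 1$ for $1 \le i \le k$. The $a_i$ are distinct in $\mathbb{F}_p$ because $p > 2\ell > k$ for large $\ell$. Consider the system $S = \{(x+a_i)^\ell - 1\}_{i=1}^{k}$. The trick is that instead of using many polynomials and a single common root, I use only $k$ polynomials but shift the starting point to manufacture many common roots. Concretely, for each $t \in \{0,1,\ldots,m-k+1\}$, the value $x_t := a + t$ satisfies $(x_t + a_i)^\ell = (a+t+i-1)^\ell \equiv 1 \pmod{p}$, since $a+t+i-1 \in A$ (this uses $t+k-1 \le m$). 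So each such $x_t$ is a common root of $S$ in $\mathbb{F}_p$. The $x_t$ are pairwise distinct modulo $p$ because their pairwise differences are at most $m < p$ (valid for $\ell$ large enough, since $p > 2\ell \ge 10\sqrt{\ell} = m$ once $\ell \ge 25$). Hence $S$ has at least $m - k + 2$ common roots in $\mathbb{F}_p$.

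Applying \cref{lem:fkms} gives $m - k + 2 \le \tfrac{2\ell}{k-1} + 3$, i.e.\ $m \le \tfrac{2\ell}{k-1} + k + 1$. With $k-1 = \tfrac{2}{\sqrt{5}}\sqrt{\ell}\,(1-o(1))$, the right hand side is
\[
    \frac{2\ell}{k-1} + k + 1 \le \left(\sqrt{5} + \frac{2}{\sqrt{5}}\right)\sqrt{\ell} + O(1) = \frac{7}{\sqrt{5}}\sqrt{\ell} + O(1),
\]
which is strictly less than $m = 10\sqrt{\ell}$ for all sufficiently large $\ell$. This contradicts the assumption that all elements of $A$ are $\ell$-th roots of unity modulo $p$, proving the claim.

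The main obstacle is structural rather than computational: a naive application of \cref{lem:fkms} with all $m+1$ polynomials $(x+j)^\ell - 1$ violates the lemma's hypothesis $k \le \tfrac{2}{\sqrt{5}}\sqrt{\ell}+1$, so one must resist the temptation to use one polynomial per element of $A$. The right move is to use only about $\sqrt{\ell}$ polynomials while exploiting the consecutive structure of $A$ to generate $\Omega(\sqrt{\ell})$ distinct common roots by translation, which is exactly where the improvement over the trivial $\ell$ bound comes from.
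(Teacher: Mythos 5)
Your proof is correct and takes essentially the same approach as the paper's: both invoke \cref{lem:fkms} on a system of $\Theta(\sqrt{\ell})$ polynomials of the form $(x+c)^\ell-1$ and exploit the consecutive structure of $A$ to pit the number of common roots against the FKMS bound. The only cosmetic differences are that you argue by contradiction (assume all of $A$ consists of $\ell$-th roots of unity and count common roots of $S$ by translation) whereas the paper argues directly (bound the common roots, find an element of $A$ avoiding them, and shift), and that you take $k$ at the maximum allowed by the FKMS hypothesis while the paper takes $k=\sqrt{\ell}/2$; both choices comfortably beat the constant $10$.
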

\begin{proof}

    By assumption, $N$ has a prime factor $p>2\ell$.

    Assuming \(\ell>25\), we know \(10\sqrt{\ell} < 2\ell < p\), so all elements of $A$ are distinct modulo $p$.
    
    Consider~\cref{lem:fkms} over $\F_p$ with $k = \sqrt{\ell}/2$, $\theta_i=1$ and $a_i= i$ for $1 \le i \le k$.
    Note that indeed $k < \frac{2}{\sqrt{5}}\ell + 1$, so
    by~\cref{lem:fkms}, we can conclude that the polynomial system $S = \{(x+i)^\ell - 1\}_{1 \le i \le k}$ has at most
    \[
        2\ell/(k-1)+3 = 2\ell/\left(\sqrt{\ell}/2-1\right)+3 < 5\sqrt{\ell} \text{ (for large enough $\ell$)}
    \]
    common roots over $\F_p$. 
    
    Let $A' = \{a,a+1,\ldots,a+5\sqrt{\ell}\} \subseteq A$.
    Then $A$ contains an element $b=a+j$ which is \emph{not} a common root of $S$.
    That is, for some $1 \le i \le k$, $(b+i)^\ell \not\equiv 1$ modulo $p$, and therefore $(b+i) \not\equiv 1 \bmod N$.
    Since $b+i=a+j+i$ with $j \le 5\sqrt{\ell}$ and $i \le \sqrt{\ell}/2$,
    we have that $b+i \in A$.
\end{proof}

We note that \cref{cl:consecutive} assumes that $N$ has a prime factor which is greater than $2\ell$.
The reason we will be able to get away with this assumption in our main algorithm is that if $N$ has small prime
factors, we can use \cref{thm:strassens-method} to find them quickly.
        \section{Factoring \(N\) Given the Residue of a Prime Divisor}
        \label{sec:factoring-N-given-residue}
        In this section, we give an algorithm that finds all prime factors of $N$ that equal $1$ modulo $s$, for
$s \ge N^{\alpha}$ coprime to $N$.
Our algorithm is a modification of Harvey and Hittmeir's algorithm for finding $r$-power divisors of $N$
\cite{harvey2022deterministic}.

\begin{definition}
    Let \(N,r \in \mathbb{N}\).
    An integer \(p > 1\) is called an \emph{\(r\)-th power divisor} of \(N\) if \(p^r \mid N\).
\end{definition}

The main theorem we prove in this section is the following.

\begin{theorem}
\label{thm:factoring-N-given-residue}
    Let \(N,s,r \in \mathbb{N}\) be integers such that
    \begin{equation}
    \label{eq:s-definition}
        s \geq N^{\alpha} \text{ for some } 0 < \alpha \leq \tfrac{1}{4r}, \quad \gcd(N, s) = 1.
    \end{equation}
    Then there exists a deterministic algorithm that finds all \emph{\(r\)}-th power divisors of \(N\) that equal $1$
    modulo $s$, and runs in time
    \[
        N^{1/4r - \alpha + o(1)}.
    \]
\end{theorem}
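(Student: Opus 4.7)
The plan is to adapt the Harvey-Hittmeir algorithm for finding $r$-th power divisors \cite{harvey2022deterministic} by replacing its base polynomial $P+x$ with a new monic polynomial that exploits the residue constraint $p \equiv 1 \pmod s$. In the original H-H algorithm, the range $[1, N^{1/r}]$ is swept with intervals $[P-H, P+H]$ of radius $H \approx N^{3/(4r)}$, and the total running time $N^{1/(4r) + o(1)}$ is governed by the number of intervals, each processed in polynomial time by a Coppersmith-style LLL \cite{LLL82} computation on a lattice built from $P+x$. My goal is to sweep with intervals of radius $sH$ centered at $P \equiv 1 \pmod s$, which shrinks the number of intervals by a factor $s \ge N^\alpha$ while leaving the per-interval cost unchanged, yielding the claimed $N^{1/(4r) - \alpha + o(1)}$.

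The technical crux is the new base polynomial. Given a center $P$ with $P \equiv 1 \pmod s$, any $r$-th power divisor $p$ of $N$ in the enlarged interval with $p \equiv 1 \pmod s$ may be written as $p = P + s y_0$ for some integer $y_0$ with $|y_0| \le H$. Because $\gcd(s,N) = 1$, the inverse $s^{-1} \bmod N$ exists; set $Q := P \cdot s^{-1} \bmod N$ and $f(x) := x + Q$. Then $s \cdot f(y_0) = s y_0 + s Q \equiv (p-P) + P = p \equiv 0 \pmod p$, using $p \mid N$ and $sQ \equiv P \pmod N$; since $\gcd(s,p) = 1$ this forces $f(y_0) \equiv 0 \pmod p$. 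Thus $f$ is monic of degree $1$ and plays exactly the role of $P+x$ in \cite{harvey2022deterministic}, but with the small root $y_0$ of size at most $H$ in place of $p-P$.

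With $f$ in hand I would feed it into the H-H lattice verbatim: for parameters $m,d$ tuned as in \cite{harvey2022deterministic}, form the lattice spanned by the coefficient vectors of $g_i(Hx)$, where
\[
    g_i(x) = \begin{cases} N^{m - \lfloor i/r \rfloor} f(x)^i, & 0 \le i < rm, \\ f(x)^i, & rm \le i < d. \end{cases}
\]
Monicity of $f$ makes this basis lower-triangular with the same diagonal entries as in \cite{harvey2022deterministic}, so the determinant is unchanged, and the Howgrave-Graham / LLL analysis produces a nonzero $h \in \Z[x]$ satisfying $h(y_0) \equiv 0 \pmod{p^{rm}}$ and $|h(y_0)| < p^{rm}$, forcing $h(y_0) = 0$ over $\Z$. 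Factoring $h$ in $\mathbb{Q}[x]$ and testing each integer root $y \in [-H,H]$ via $(P + sy)^r \mid N$ recovers every target divisor in the interval, and sweeping the $N^{1/r}/(2sH) = N^{1/(4r) - \alpha + o(1)}$ centers $P \equiv 1 \pmod s$ in $[1, N^{1/r}]$ delivers the runtime. The main subtlety is bookkeeping: the constant term $Q$ can be nearly as large as $N$ (unlike the original $P \le N^{1/r}$), which inflates the off-diagonal entries of the basis, but since LLL's short-vector bound and the Howgrave-Graham inequality depend only on the determinant, this does not affect the final bound beyond the $o(1)$ slack.
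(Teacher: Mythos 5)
Your proposal is essentially the same as the paper's proof: both replace the Harvey--Hittmeir base polynomial $P+x$ with a monic linear polynomial built from $s^{-1}\bmod N$, thereby shrinking the small root from size $H$ to size $H/s$ and reducing the number of intervals by a factor of $s$. The only cosmetic difference is that you restrict the centers $P$ to satisfy $P\equiv 1\pmod s$ (so the polynomial becomes $x + Ps^{-1}\bmod N$), whereas the paper allows arbitrary $P$ and inserts a correction term $\tilde P = P\bmod s$, but these two formulations are interchangeable and your observation that the inflated constant term only affects LLL's running time polylogarithmically (not the determinant) is exactly the point the paper makes as well.
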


As stated in~\cref{sec:introduction}, a similar result was independently obtained by Gao, Feng, Hu, and
Pan~\cite{cryptoeprint:2025/1004}.

The most general setting is the one in which we do not assume $N$ has any $r$-power divisors for $r \ge 2$.
In this case, we may set $r=1$ and the running time of the algorithm is  $N^{1/4-\alpha+o(1)}$.
However, since the parameters we obtain are better when $N$ is assumed to have $r$ power divisors for $r \ge 2$, and
since the proof is identical for all values of $r$ (and also perhaps under the influence of the original purpose of
the algorithm in \cite{harvey2022deterministic}), we treat $r$ as a parameter and present the proof for all values of
$r$.

We start with the following claim.

\begin{claim}
    \label{cl:new-polynomial}
    Let \(N, s \in \mathbb{N}\) with \(\gcd(N, s) = 1\). Let $s'$ denote the unique integer in $[1,N-1]$ which is
    equivalent to the inverse of $s$ modulo $N$.\footnote{We choose to use the notation $s'$ rather than $s^{-1}$ in
    order to clearly distinguish between multiplication by $s'$ and division by $s$: these operations are equivalent
    modulo $N$ but distinct over the integers.}
    
    Given $H<P \in \N$, suppose $p_ 0 = sx_0 + 1$ is a divisor of $N$ that lies in the interval $[P-H, P+H]$.
    Let $\tilde{P}$ be the unique integer in $[0,s-1]$ which is equivalent to $P$ modulo $s$, and define
    $g(x) \in \Z[x]$ as
    \[
        g(x) \coloneq x + s' + s'(P - \tilde{P}).
    \]
    where every coefficient is taken modulo $N$ (so that every coefficient of $g$ is an integer in $[0,N-1]$).
    Further, let $x' \coloneq x_0 - \frac{P - \tilde{P}}{s}$.
    Then:
    \begin{enumerate}
  	\item \label{item:p0-divides-g(x')} \(g(x') \equiv 0 \pmod{p_0}\)
	\item \label{item:x'-upper-bound}  $|x'| \le H/s+1$.
    \end{enumerate}
\end{claim}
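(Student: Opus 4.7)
The plan is to verify both items by direct computation from the explicit formulas defining $g$ and $x'$. The one mildly subtle point is to keep track of the two moduli in play: the coefficients of $g$ are reduced modulo $N$ by definition, while divisibility is checked modulo $p_0$. Since $p_0 \mid N$, that reduction has no effect on the value of $g(x') \bmod p_0$, and the integer $s'$ continues to serve as a multiplicative inverse of $s$ modulo $p_0$ (from $\gcd(s,N)=1$ and $p_0 \mid N$ we get $\gcd(s,p_0)=1$). Accordingly, I will work with the unreduced expression $g(x) = x + s' + s'(P - \tilde{P})$.

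For part~(1), the idea is to clear denominators by computing $s \cdot g(x')$:
\[
s \cdot g(x') \;=\; \bigl(s x_0 - (P - \tilde{P})\bigr) + s s' + s s' (P - \tilde{P}).
\]
Reducing modulo $p_0$ and using $s s' \equiv 1 \pmod{p_0}$ collapses the right-hand side to $s x_0 + 1 = p_0 \equiv 0 \pmod{p_0}$. Since $s$ is invertible modulo $p_0$, this yields $g(x') \equiv 0 \pmod{p_0}$.

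For part~(2), I first observe that $(P - \tilde{P})/s$ is an integer, because $\tilde{P} \equiv P \pmod s$, so $x' \in \Z$. The assumption $p_0 = s x_0 + 1 \in [P-H,\, P+H]$ gives $s x_0 - P \in [-H-1,\, H-1]$, and combining with $\tilde{P} \in [0, s-1]$ yields
\[
s x' \;=\; (s x_0 - P) + \tilde{P} \;\in\; [-H-1,\ H + s - 2].
\]
Hence $|s x'| \le H + s$, and dividing by $s$ gives the required $|x'| \le H/s + 1$.

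The main obstacle, such as it is, is really just the bookkeeping between the two moduli $N$ and $p_0$ and the dual role played by $s'$; once those are sorted out, both parts reduce to short arithmetic substitutions, and neither step requires any deeper input.
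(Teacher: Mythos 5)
Your proof is correct and takes essentially the same approach as the paper: both items are verified by direct substitution, and your treatment of the mod-$N$ reduction of the coefficients is the right way to handle it. The only minor technical difference is in part~(1), where you clear denominators by multiplying by $s$ and work modulo $p_0$ directly, whereas the paper works modulo $N$ and identifies $g(x') \equiv s'p_0 \pmod N$; the two computations are equivalent.
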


\begin{proof}
    Since \(P \equiv \tilde{P} \pmod{s}\), we have that \(s \mid (P - \tilde{P})\),
    meaning that \((P-\tilde{P})/s\) is an
    integer, and thus \(x'\) is also an integer.
    We observe that \((P-\tilde{P})/s \equiv s'(P-\tilde{P}) \pmod{N}\),
    and that \(s' p_0=s' (s x_0 + 1) \equiv x_0 + s' \pmod{N}\).
    Therefore,
    \begin{gather*}
        g(x') =
        x_0 - (P-\tilde{P})/s + s' + s'(P-\tilde{P}) \\
        \equiv x_0 + s' \equiv
        s' p_0  \pmod{N}.
    \end{gather*}
    Since $g(x')$ is divisible by $p_0$ modulo $N$, and $p_0$ divides $N$, $g(x')$ is divisible by $p_0$ over the integers.
    This establishes \eqref{item:p0-divides-g(x')}.
    
    As for \eqref{item:x'-upper-bound}, by assumption \(p_0=sx_0+1 \in [P-H, P+H]\), which implies that
    \[
    -H \le sx_0+1 - P \le H
    \]
    Adding $\tilde{P}-1$ to both sides of the equation and dividing by $s$, we get that
    \[
    \frac{-H + \tilde{P} - 1}{s} \le x_0 - \frac{P - \tilde{P}}{s} \le \frac{H + \tilde{P} - 1}{s}
    \]
    Since $0\le \tilde{P} <s$ and $x' = x_0 - \frac{P - \tilde{P}}{s}$, \eqref{item:x'-upper-bound} follows.
\end{proof}

\cref{cl:new-polynomial} is used in order to search for prime divisors of $N$ (that equal 1 modulo $s$) in small
intervals of the form $[P-H, P+H]$.

We remark that \cref{cl:new-polynomial} is the main difference between our algorithm and Harvey and Hittmeir's
algorithm \cite{harvey2022deterministic}.
Harvey and Hittmeir observe that a divisor $p_0 \in [P-H, P+H]$ corresponds to a root of $x+P$ modulo $p_0$ of
absolute value at most $H$ (the root is $p_0-P$).
In our setup, the added assumption that $p_0$ equals $1$ modulo $s$ means that the search space is reduced by a
factor of $s$.
Intuitively, this means that the bound on the root size is reduced by a factor of $s$ and this is indeed what is shown
in \cref{cl:new-polynomial}.
However, as it will soon be clear, we also want the polynomial $g(x)$ to be monic in $x$, which leads to the
definition in \cref{cl:new-polynomial}.

We now construct the set of polynomials that are the basis of the lattice whose short vector will be the polynomial
whose roots we will eventually find.

\begin{claim}
\label{cl:f_i's}
Let $N, s$ be as in \cref{cl:new-polynomial}. Let $r,m,d \in \N$ be arbitrary parameters.
Define the polynomials \( f_0, \dots, f_{d-1} \in \mathbb{Z}[x] \) as
    \[
        f_i(x) \coloneq
        \begin{cases}
            N^{m - \lfloor i/r \rfloor}g(x)^i, & 0 \le i < rm, \\
           g(x)^i, & rm \le i < d.
        \end{cases}
    \]
    where $g$ is as in \cref{cl:new-polynomial}.
    Let $p_0 \in [P-H, P+H]$ be an $r$-power divisor of $N$ that equals 1 modulo $s$, and let $x'$ be as in
    \cref{cl:new-polynomial}.
    Then \(p_0^{rm} \mid f_i(x')\) for all \(0 \leq i < d\).
\end{claim}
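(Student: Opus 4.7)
The plan is to verify the divisibility $p_0^{rm} \mid f_i(x')$ by a straightforward case analysis on whether $i$ lies below or above the threshold $rm$, using \cref{cl:new-polynomial} as the key input. Specifically, \cref{cl:new-polynomial}\eqref{item:p0-divides-g(x')} tells us that $p_0 \mid g(x')$, and hence $p_0^i \mid g(x')^i$ for every $i \ge 0$. Combined with the hypothesis that $p_0$ is an $r$-th power divisor of $N$, i.e. $p_0^r \mid N$, this is essentially all we need.

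First I would handle the easy case $rm \le i < d$. Here $f_i(x') = g(x')^i$, and since $p_0 \mid g(x')$ we get $p_0^i \mid f_i(x')$, and $i \ge rm$ gives the desired $p_0^{rm} \mid f_i(x')$.

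Next I would handle the case $0 \le i < rm$, where $f_i(x') = N^{m - \lfloor i/r \rfloor} g(x')^i$. From $p_0^r \mid N$ I get $p_0^{r(m - \lfloor i/r \rfloor)} \mid N^{m - \lfloor i/r \rfloor}$, and from $p_0 \mid g(x')$ I get $p_0^{i} \mid g(x')^i$. Multiplying the two, $p_0^{r(m - \lfloor i/r \rfloor) + i}$ divides $f_i(x')$. Writing $i = r\lfloor i/r\rfloor + (i \bmod r)$, the exponent equals $rm + (i \bmod r) \ge rm$, so $p_0^{rm} \mid f_i(x')$, as required.

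There is no real obstacle here; the only subtle point is checking the exponent arithmetic in the $0 \le i < rm$ case, which is why $f_i$ is defined with the floor function $\lfloor i/r \rfloor$ rather than $i/r$. The design of the $f_i$'s is precisely calibrated so that, regardless of whether most of the ``$p_0$-divisibility budget'' comes from the factor $N^{m - \lfloor i/r\rfloor}$ (when $i$ is small) or from the factor $g(x')^i$ (when $i$ is large), the total multiplicity of $p_0$ in $f_i(x')$ is always at least $rm$. This uniform multiplicity is exactly what will allow the LLL step later to produce a short integer combination of the $f_i$'s vanishing modulo $p_0^{rm}$ at $x'$.
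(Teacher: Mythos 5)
Your proof is correct and follows essentially the same two-case argument as the paper's own proof: invoke \cref{cl:new-polynomial} to get $p_0 \mid g(x')$, handle $i \ge rm$ directly, and for $i < rm$ combine $p_0^{r(m-\lfloor i/r\rfloor)} \mid N^{m-\lfloor i/r\rfloor}$ with $p_0^i \mid g(x')^i$ and observe $r\lfloor i/r\rfloor \le i$. Your spelled-out exponent arithmetic via $i = r\lfloor i/r\rfloor + (i \bmod r)$ is just a slightly more explicit version of the same inequality.
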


\begin{proof}
    If \(0 \leq i < rm\), we have \(f_i(x)=N^{m - \lfloor i/r \rfloor}g(x)^i\). By \cref{cl:new-polynomial},
    $p_0^i \mid g(x')^i$ and $p_0^r \mid N$, which means
    $p_0^{r(m-\lfloor i/r \rfloor)} \mid N^{m - \lfloor i/r \rfloor}$.
    Since $r \cdot \lfloor i/r \rfloor \le i$, the claim follows.
    If $rm \le i < d$, then by \cref{cl:new-polynomial}, $p_0^{rm} \mid g(x')^i$.
\end{proof}

The following theorem states that assuming $H$ is small enough, we may efficiently
search intervals of the form $[P-H, P+H]$ for prime divisors that equal $1$ modulo $s$.
In order to make the technical parameters easier to digest, the reader may assume that the parameters $d$ and $m$ are roughly
logarithmic in $N$ (this will be true in the final setting of the parameters).
The theorem is analogous to Theorem 2.1 in \cite{harvey2022deterministic}, and the proof is very similar.

\begin{theorem}
\label{thm:h-interval-theorem}
    Let \(N,s\in \mathbb{N}\) be integers such that  $s \geq N^{\alpha}$ for some $0 < \alpha < 1/2$, and $\gcd(N,s)=1$.
    Let $r \le \log N$ be a positive integer, $d \in \N$ and $m \le d/r$.
    
    Further, suppose $P,H$ are integers such that  $H < P \leq N^{1/r}$.
    
    Then, assuming
    \begin{equation}
    \label{eq:G-size}
        G^{(d-1)/2} < \frac{1}{d^{1/2}2^{(d-1)/4}}\cdot\frac{(P-H)^{rm}}{N^{rm(m+1)/2d}}
        \text{\;\;\; for } G \coloneqq \lceil H/s \rceil +1,
    \end{equation}
    there exists a deterministic algorithm that finds all integers \(p\) such that \(p^r \mid N\) and
    $p \equiv 1 \pmod s$ in the interval
    \([P-H, P+H]\), that runs in time
    \[
        O(d^{c}\log^{k}N)
    \]
    for some constants \(c\) and \(k\).
\end{theorem}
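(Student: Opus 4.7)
The plan is to follow the standard Coppersmith-style lattice argument, mirroring \cite{harvey2022deterministic} but using the polynomial $g(x)$ from \cref{cl:new-polynomial} in place of $P+x$. First I would set $G \coloneqq \lceil H/s \rceil +1$ and consider the lattice $\mathcal{L} \subset \Z^d$ spanned by the coefficient vectors of the scaled polynomials $f_0(Gx), f_1(Gx), \ldots, f_{d-1}(Gx)$. The crucial structural observation is that $g(x)$ is monic of degree $1$ (this is exactly why \cref{cl:new-polynomial} was stated with $x+s'+s'(P-\tilde P)$ rather than a non-monic alternative), so $g(x)^i$ is monic of degree $i$, and the matrix whose rows are the coefficient vectors of $f_i(Gx)$ is lower triangular with diagonal entries $N^{m-\lfloor i/r\rfloor}G^i$ for $0 \le i < rm$ and $G^i$ for $rm \le i < d$.

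Next, I would compute the determinant explicitly from the triangular form. The $N$-contribution is $N^{\sum_{i=0}^{rm-1}(m-\lfloor i/r\rfloor)} = N^{rm(m+1)/2}$ (splitting the sum into $m$ blocks of length $r$), and the $G$-contribution is $G^{\sum_{i=0}^{d-1}i} = G^{d(d-1)/2}$, giving $\det\mathcal{L} = N^{rm(m+1)/2}\cdot G^{d(d-1)/2}$. Applying LLL \cite{LLL82} to $\mathcal{L}$ produces in time $O(d^{c}\log^{k}N)$ a nonzero vector $v$ with $\|v\|_{2}\le 2^{(d-1)/4}(\det\mathcal{L})^{1/d}$. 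This vector corresponds to a polynomial $h(x) = \sum_j b_j x^j \in \Z[x]$ obtained as an integer linear combination of the $f_i$, and by \cref{cl:f_i's} we have $p_0^{rm}\mid h(x')$ for every $r$-th power divisor $p_0$ of $N$ in $[P-H,P+H]$ with $p_0\equiv1\pmod s$, where $x'$ is the associated small root from \cref{cl:new-polynomial}.

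The heart of the argument is now to convert this divisibility into a true integer equation. Since $|x'| \le G$ by \cref{cl:new-polynomial}\eqref{item:x'-upper-bound}, Cauchy--Schwarz gives
\[
|h(x')| \le \sum_j |b_j|G^{j} \le \sqrt{d}\,\|v\|_{2} \le \sqrt{d}\cdot 2^{(d-1)/4}\cdot N^{rm(m+1)/2d}\cdot G^{(d-1)/2}.
\]
Hypothesis \eqref{eq:G-size} was rigged precisely so that the right-hand side is strictly less than $(P-H)^{rm} \le p_0^{rm}$. Combined with $p_0^{rm}\mid h(x')$, this forces $h(x') = 0$ over the integers. The algorithm then factors $h$ in $\Z[x]$ in polynomial time (e.g., via \cite{LLL82}), enumerates its integer roots $x'$, reconstructs each candidate $p_0 = s\bigl(x' + (P-\tilde P)/s\bigr)+1$, and verifies $p_0^r \mid N$ by trial division.

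The main technical obstacle is verifying the determinant identity and tracking constants carefully enough to land exactly on condition \eqref{eq:G-size}; the rest is essentially bookkeeping. A minor wrinkle is that the $f_i$ have coefficients reduced modulo $N$ (the coefficients of $g$ are reduced modulo $N$), but this is harmless since we only ever use the $f_i$ modulo $p_0^{rm}$ and through the lattice, which is unaffected by adding $N$-multiples of lower-degree basis vectors. The running time is dominated by the LLL call on a dimension-$d$ lattice whose entries have bit-length $O(d\log N)$, yielding the claimed $O(d^{c}\log^{k}N)$ bound.
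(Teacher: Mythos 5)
Your proof is correct and follows essentially the same Coppersmith-style lattice argument as the paper: the same lattice from the scaled polynomials $f_i(Gx)$, the same triangular-determinant computation giving $N^{rm(m+1)/2}G^{d(d-1)/2}$, the same LLL short-vector bound combined with Cauchy--Schwarz and hypothesis \eqref{eq:G-size} to force $h(x')=0$ over $\Z$, and the same root-enumeration finish. The only cosmetic difference is that you compute the exponent $\sum_{i=0}^{rm-1}(m-\lfloor i/r\rfloor)=rm(m+1)/2$ explicitly rather than citing the analogous calculation in \cite{harvey2022deterministic}.
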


\begin{proof}
\sloppy
We follow the proof of Theorem 2.1 in \cite{harvey2022deterministic}.
    Define the polynomials \( f_0, \dots, f_{d-1} \in \mathbb{Z}[x] \) as in \cref{cl:f_i's}.
    We associate with each polynomial \(f_i(x)\) the coefficient vector $\bar{v}_i$ of the polynomial $f_i(Gx)$.
    That is, \(\bar{v}_i=(a_{i,0},a_{i,1} G,\ldots,a_{i,d-1} G^{d-1})\) where \(a_{i,j}\) is the coefficient of
    \(x^j\) in \(f_i(x)\).

    Define the matrix \(M\) as the matrix whose rows are the vectors \(\bar{v}_i\) for \(0 \leq i < d\).

    We can now run the LLL \cite{LLL82} basis reduction algorithm from \cref{lem:LLL}  on the lattice spanned by the vectors $\bar{v}_i$
    for $0 \le i < d$.
    This algorithm returns a vector \( \bar{w} \) which is an integral linear combination of the $\bar{v}_i$'s, such
    that
    \[
        \lVert \bar{w} \rVert \leq 2^{(d-1)/4} \det(M)^{1/d}.
    \]
    As $M$ is a lower-triangular matrix, its determinant is the product of the leading coefficients of $f_i(Gx)$,
    and we get, using the same calculation as in \cite[Proposition 2.3]{harvey2022deterministic}, that
    \[
        \det(M) = G^{d(d-1)/2}N^{rm(m+1)/2}.
    \]

    The running time of the LLL basis reduction algorithm is polynomial in the dimension of the lattice \(d\) and the
    logarithm of the norm of the basis vectors (see \cref{subsec:lll}).
    Since the entries of \( \bar{v}_i \) are bounded by \(N^{O(d)}\), we can find a non-zero vector \(\bar{w}\) in
    the lattice such that
    \[
        \lVert \bar{w} \rVert \leq 2^{(d-1)/4} G^{(d-1)/2}N^{rm(m+1)/2d}
    \]
    in time $O(d^{c}\log^{k}N)$
    for some constants \(c\) and \(k\).

    Let \(\bar{w} = (w_0, \ldots, w_{d-1}) \). By Cauchy-Schwarz,
\[            | w_0 | + \dots + | w_{d-1} | \leq
            d^{1/2} \lVert \bar{w} \rVert <
            d^{1/2} 2^{(d-1)/4} G^{(d-1)/2}N^{rm(m+1)/2d}.
            \]
    From $\bar{w}$ create the polynomial
        \[
        h(x) \coloneq w_0 + w_1 (x/G) + \ldots + w_{d-1} (x/G)^{d-1}
    \]
    which is a linear combination of the polynomials \(f_i(x)\) for $0 \le i < d$.
    Denoting $h = \sum_{i=0}^{d-1} h_i x^i$, we have that $h_j = w_j/G^j$ (note that $h_j$ is an integer since $w_j$ is
    a linear combination of the integers $a_{i,j}$'s multiplied by $G^j$).
    It follows that
    \begin{equation}
    \label{eq:coefficients-bound}
    |h_0| + |h_1|G + \cdots +|h_{d-1}|G^{d-1} \le d^{1/2} 2^{(d-1)/4} G^{(d-1)/2}N^{rm(m+1)/2d}.
     \end{equation}

    Suppose now $p_0 \in [P-H, P+H]$ is an $r$-power divisor of $N$ that equals 1 modulo $s$, and $x'$ be as in
    \cref{cl:new-polynomial}.

    By \cref{cl:f_i's}, \(p_0^{rm} \mid f_i(x')\) for all \(0 \leq i < d\), and since \(h(x)\) is a linear combination
    of the \(f_i\)'s, we also have that \(p_0^{rm} \mid h(x')\).

    By \cref{cl:new-polynomial}, $|x'| \le G$, so together with~\eqref{eq:coefficients-bound} and~\eqref{eq:G-size} we
    have that
    \[
        |h(x')| \leq |h_0| + |h_1|G + \dots + |h_{d-1}|G^{d-1} < (P-H)^{rm} \leq p_0^{rm}
    \]
    It follows that that \(x'\) is a root of \(h(x)\) also over the integers.

    Therefore, after computing $h(x)$, the algorithm computes the roots of $h(x)$ over the integers in time polynomial
    in $d$ and $\log N$ (see, e.g., \cite[Theorem 1.2]{harvey2022deterministic}) and for each root, checks whether it
    is indeed an $r$-power divisor of $N$ that equals 1 modulo $s$.
\end{proof}

Next, as in~\cite[Section 3]{harvey2022deterministic}, we consider finding $r$-power divisors
\(T \leq p^r \leq T'\) in larger intervals $[T,T']$.
The method would be to apply~\cref{thm:h-interval-theorem} to subintervals of the form \([P-H, P+H]\) and cover the
entire interval \([T, T']\).
As mentioned in~\cite{harvey2022deterministic}, it turns out that the running time is mainly determined by the number
of subintervals, so we aim to maximize the size of \(H\) while ensuring that the condition
in~\cref{thm:h-interval-theorem} holds, mainly condition (\ref{eq:G-size}).
For that, we require \(G\le\tilde{G}\) for
\begin{equation}
\label{eq:G-tilde}
    \tilde{G} \coloneq \frac{1}{d^{1/(d-1)}2^{1/2}} \cdot \frac{T^{2rm/(d-1)}}{N^{rm(m+1)/d(d-1)}}>0.
\end{equation}

Note that compared to \cite{harvey2022deterministic}, we can take $H$ to be roughly larger by a factor of $s$, which
would mean that the number of intervals is smaller by a factor of $s$.

The following lower bound on $\tilde{G}$ is calculated in~\cite{harvey2022deterministic}:

\begin{lemma}[\cite{harvey2022deterministic}, Lemma 3.1]
    \label{lem:harvey-hittmeir-lemma}
    Let \(N, r, d,\) and \(T\) be positive integers, where \(d \geq 2\) and \(T \leq N^{1/r}\).
    Define
    \begin{equation}
        \label{eq:m}
        m \coloneq \left\lfloor \frac{(d-1)\log T}{\log N} \right\rfloor,
    \end{equation}
    and let \(\tilde{G}\) be as defined in (\ref{eq:G-tilde}).
    Then the following inequality holds:
    \[
        \tilde{G} > \frac{1}{3}N^{\theta^2/r - 1/(d-1)},
    \]
    where
    \begin{equation}
        \label{eq:theta}
        \theta \coloneq \frac{r \log T}{\log N}, \quad \text{so that } T = N^{\theta / r} \text{ and } \theta \in [0, 1].
    \end{equation}
\end{lemma}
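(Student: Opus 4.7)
The plan is a direct algebraic computation. Substituting $T = N^{\theta/r}$ into the definition~(\ref{eq:G-tilde}) gives
\[
    \tilde{G} = \frac{1}{d^{1/(d-1)}\sqrt{2}} \cdot N^{E}, \qquad E := \frac{2m\theta}{d-1} - \frac{rm(m+1)}{d(d-1)},
\]
so, after taking logarithms, the target inequality is equivalent to
\[
    \Bigl(E - \tfrac{\theta^2}{r} + \tfrac{1}{d-1}\Bigr)\log N \;>\; \tfrac{\log d}{d-1} + \tfrac{\log 2}{2} - \log 3.
\]
The right-hand side is negative for every $d \ge 2$, since $d^{1/(d-1)}\sqrt{2}$ is decreasing in $d$ with maximum value $2\sqrt{2} < 3$ at $d=2$. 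It thus suffices to lower-bound the left-hand side by a small enough negative quantity.

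Next, I would use~(\ref{eq:m}) to write $m = \tfrac{(d-1)\theta}{r} - \delta$ with $\delta \in [0,1)$, expand the quadratic $m(m+1)$, and simplify. Routine algebra yields
\[
    E - \frac{\theta^2}{r} + \frac{1}{d-1} \;=\; \frac{\theta^2}{rd} + \frac{d(1-\theta) + \theta(1-2\delta) + r\delta(1-\delta)}{d(d-1)}.
\]
Three of the four pieces, namely $\theta^2/(rd)$, $d(1-\theta)$, and $r\delta(1-\delta)$, are nonnegative for $\theta,\delta \in [0,1]$, and the last, $\theta(1-2\delta)$, is at least $-1$.

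The final step, and the main obstacle, is a careful case analysis to pin down the constants. In the regime $m = 0$, i.e.\ $(d-1)\theta < r$, an elementary check gives $\theta^2/r < 1/(d-1)$ in both subcases $r \le d-1$ and $r > d-1$, so $N^{\theta^2/r - 1/(d-1)} \le 1$, and the inequality follows from $\tilde{G} = 1/(d^{1/(d-1)}\sqrt{2}) \ge 1/(2\sqrt{2}) > 1/3$. In the regime $m \ge 1$, the constraint $r/(d-1) \le \theta \le 1$ forces $r \le d-1$; a more delicate sign-tracking argument---using the fact that the relation $(d-1)\theta/r = m + \delta$ prevents $\theta$ close to $1$ and $\delta$ close to $1$ from occurring simultaneously---shows that the positive contributions $\theta^2/(rd)$ and $d(1-\theta)/(d(d-1))$ dominate any negative $\theta(1-2\delta)/(d(d-1))$, yielding the desired bound.
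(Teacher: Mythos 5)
This lemma is cited verbatim from Harvey and Hittmeir \cite{harvey2022deterministic} and is not reproved in the present paper, so there is no ``paper proof'' to compare against; what follows is an assessment of your argument on its own terms.

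Your algebraic reduction is correct: substituting $T = N^{\theta/r}$ and $m = (d-1)\theta/r - \delta$ with $\delta \in [0,1)$ into \eqref{eq:G-tilde} does yield
\[
    E - \frac{\theta^2}{r} + \frac{1}{d-1} \;=\; \frac{\theta^2}{rd} + \frac{d(1-\theta) + \theta(1-2\delta) + r\delta(1-\delta)}{d(d-1)},
\]
and your handling of the regime $m=0$ (where $E = 0$, $\tilde{G} = 1/(d^{1/(d-1)}\sqrt{2}) \ge 1/(2\sqrt{2}) > 1/3$, and $\theta^2/r < 1/(d-1)$) is complete and correct. The issue is that the remaining case $m \ge 1$ is not actually proved; you only assert that ``a more delicate sign-tracking argument'' works, and the specific mechanism you name --- that the identity $(d-1)\theta/r = m+\delta$ forbids $\theta \approx 1$ and $\delta \approx 1$ simultaneously --- is not needed and, as stated, is not obviously usable (it involves the number-theoretic structure of $(d-1)/r$). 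Moreover, the coarse bound $\theta(1-2\delta) \ge -1$ that you do record is not quite strong enough by itself: plugging it in leaves a quadratic in $\theta$ whose minimum over $[0,1]$ is exactly $0$ when $r = d-1$, and the equality configuration requires $\delta \to 1$, so the slack has to come from somewhere.

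The clean way to close the case $m \ge 1$ is to use the slightly sharper bound $\theta(1-2\delta) \ge -\theta$ (valid since $\delta < 1$), drop the nonnegative term $r\delta(1-\delta)$, and observe that the resulting expression is $\frac{1}{d(d-1)}\,g(\theta)$ where $g(\theta) := \frac{(d-1)\theta^2}{r} + d(1-\theta) - \theta$. Since $m \ge 1$ forces $(d-1)\theta/r \ge 1$ with $\theta \le 1$, one has $r \le d-1$, and under that constraint $g \ge 0$ on $[0,1]$: for $r \ge 2$ the vertex of $g$ lies to the right of $1$ and $g(1) = (d-1-r)/r \ge 0$, while for $r = 1$ the vertex value $d - (d+1)^2/(4(d-1))$ is positive for all $d \ge 2$. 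This gives $E - \theta^2/r + 1/(d-1) \ge 0$, and combined with $d^{1/(d-1)}\sqrt{2} \le 2\sqrt{2} < 3$ (your correct observation) the strict inequality $\tilde{G} > \frac{1}{3}N^{\theta^2/r - 1/(d-1)}$ follows. So the plan is salvageable, but as written the proof has a genuine gap at its last and hardest step, and the hint you record for filling it points in an unnecessary direction.
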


The next proposition addresses the detection of \(r\)-th power divisors within the interval \([T, T']\), and is
analogous to~\cite[Proposition 3.2]{harvey2022deterministic}.

\begin{proposition}
    \label{prop:t-interval}
    There is a deterministic algorithm with the following properties:
    Given positive integers \(N, r, s, T,\) and \(T'\) satisfying \eqref{eq:s-definition} and
    \begin{equation}
        \label{eq:T-T'}
        4^{\sqrt{\log N)/r}} \leq T \leq T' \leq N^{1/r},
    \end{equation}
    the algorithm outputs a list of all integers \(p\) within the interval \([T, T']\) that satisfy both \(p^r \mid N\)
    and \(p \equiv 1 \pmod{s}\).

    The total running time of the algorithm is bounded by
    \[
        \left( \frac{T' - T}{T} \cdot N^{\theta(1 - \theta)/r - \alpha} \right) \cdot \polylog(N)
    \]
    where \(\theta\) is defined in~\eqref{eq:theta}.
\end{proposition}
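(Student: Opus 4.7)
The plan is to follow the strategy of \cite[Proposition 3.2]{harvey2022deterministic}: cover the interval $[T, T']$ by subintervals of the form $[P-H, P+H]$ and invoke \cref{thm:h-interval-theorem} on each subinterval. The one new ingredient relative to Harvey--Hittmeir is that, thanks to \cref{cl:new-polynomial}, we may take $H$ roughly a factor of $s$ larger while still keeping $G = \lceil H/s\rceil + 1$ below the same threshold $\tilde G$; this reduces the number of subintervals by a factor of $s$ and is precisely where the extra $N^{-\alpha}$ saving in the running time originates.

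Concretely, I would choose the lattice dimension $d \coloneq \lceil \log N \rceil$ (so that both the factor $d^c \log^k N$ arising from \cref{thm:h-interval-theorem} and the factor $N^{1/(d-1)}$ appearing in \cref{lem:harvey-hittmeir-lemma} are absorbed into $\polylog(N)$), set $m$ as in~\eqref{eq:m}, and take $H$ to be the largest positive integer for which $G = \lceil H/s\rceil + 1 \le \tilde G$, so that condition~\eqref{eq:G-size} is satisfied. This gives $H = \Theta(s\tilde G)$; if this ever exceeds $T-1$, I would cap $H$ at $T-1$ to enforce the hypothesis $H < P$, but in that regime a single subinterval already handles the entire range. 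The quantitative assumption $T \ge 4^{\sqrt{\log N/r}}$ translates to $\theta^2/r \gtrsim 1/\log N$, which, combined with the above choice of $d$, is exactly what one needs to guarantee $\tilde G > 1$, i.e.\ that $H \ge 1$ and the chosen subintervals are non-empty.

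With these parameters fixed, I would cover $[T, T']$ by at most $\lceil (T'-T)/(2H)\rceil + 1$ subintervals and apply \cref{thm:h-interval-theorem} on each. Each call runs in $\polylog(N)$ time, so the total running time is bounded by
\[
\frac{T'-T}{H}\cdot \polylog(N) \;\le\; \frac{T'-T}{s\tilde G}\cdot \polylog(N).
\]
Substituting $s \ge N^{\alpha}$ and the lower bound $\tilde G \ge \tfrac{1}{3} N^{\theta^2/r - 1/(d-1)}$ from \cref{lem:harvey-hittmeir-lemma}, the $N^{1/(d-1)}$ term is swallowed into the $\polylog(N)$ factor, and using $N^{-\theta^2/r} = N^{-\theta/r}\cdot N^{\theta(1-\theta)/r} = (1/T)\cdot N^{\theta(1-\theta)/r}$ gives the advertised bound
\[
\frac{T'-T}{T}\cdot N^{\theta(1-\theta)/r - \alpha}\cdot \polylog(N).
\]

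The main obstacle is not conceptual but rather careful parameter bookkeeping: ensuring that the chosen $H$ is a positive integer (which is exactly what the quantitative hypothesis on $T$ buys us), that $H < P$ holds throughout the sweep, and that the $N^{1/(d-1)}$ slack is genuinely absorbed into $\polylog(N)$. All the substantive new content already sits in \cref{cl:new-polynomial} and \cref{thm:h-interval-theorem}; this proposition is the necessary but essentially mechanical step of applying them uniformly across $[T, T']$.
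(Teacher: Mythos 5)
Your proposal follows exactly the same route as the paper's proof: cover $[T,T']$ by subintervals, apply \cref{thm:h-interval-theorem} to each with $H$ inflated by a factor of $s$, and use \cref{lem:harvey-hittmeir-lemma} to bound $\tilde G$, so the subinterval count drops by $N^{\alpha}$. The only cosmetic difference is that you make the choice $d=\lceil \log N\rceil$ explicit and take $H$ maximal subject to $G\le \tilde G$, whereas the paper sets $H=(G-1)s$ with $G=\lfloor\tilde G\rfloor$; these yield the same bound up to constants.
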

\begin{proof}
    The argument closely follows the proof of~\cite[Proposition 3.2]{harvey2022deterministic}.
    As in ~\cite[Proposition 3.2]{harvey2022deterministic}, \cref{lem:harvey-hittmeir-lemma} shows that
    $\tilde{G}$ (defined in~\cref{eq:G-tilde}) satisfies $\tilde{G} \ge \frac{1}{6}N^{\theta^2/r}$ and
    $G = \lfloor \tilde{G} \rfloor$ satisfies
    \[
            G \geq \tilde{G}/2 > \frac{1}{12}N^{\theta^2/r}.
    \]

    Let $H = (G-1)s$ so that \( G = \lceil H/s \rceil + 1 \). Then $H \ge \frac{G}{2}s$, so that
    \[
            H \ge \frac{s}{24}N^{\theta^2/r}.
    \]
    
    We now apply \cref{thm:h-interval-theorem} with $N,r,d,m,G$ and $H$ as above and as set in
    \cref{lem:harvey-hittmeir-lemma}, with $P=T+H, P=T+3H$ and so on, to cover the entire interval $[T,T']$.

    The total number of intervals that must be examined is
    \[
        \left\lceil \frac{T' - T}{2H} \right\rceil \le \frac{T' - T}{\frac{s}{12}N^{\theta^2/r}} + 1 =
        O \left( \frac{T'-T}{T} N^{\theta(1-\theta)/r - \alpha} \right)
     \]
     where we have used the fact that $s\ge N^{\alpha}$ and $T=N^{\theta/r}$.

    For each of these subintervals, applying~\cref{thm:h-interval-theorem} requires time
    \[
        O(\log^k N),
    \]
    for some constant \( k \in \N \).
\end{proof}

We can now prove \cref{thm:factoring-N-given-residue}.

\begin{proof}[Proof of \cref{thm:factoring-N-given-residue}]

The proof closely follows the argument of Theorem~1.1 in~\cite{harvey2022deterministic}.

We start by searching for $r$-power divisors that equal $1$ modulo $s$ using a brute force search, for all integers up
to $4^{\sqrt{\log N /r}}$.
This takes time $N^{o(1)}$.

Letting $2^k = 4^{\sqrt{\log N /r}}$, we now apply \cref{prop:t-interval} successively on the intervals
$[2^k, 2^{k+1}]$, $[2^{k+1}, 2^{k+2}]$ and so on, until reaching $N^{1/r}$.
Since $k \ge 2\sqrt{\log N /r}$, the assumption of \cref{prop:t-interval} is satisfied, and for each such interval,
$\frac{T'-T}{T} \le 2$ and since $\theta \in [0,1]$, $\theta(1-\theta) \le \frac{1}{4}$.

The number of intervals $[2^k, 2^{k+1}]$ is $O(\log N)$, and therefore the total running time is as stated in the
theorem.
\end{proof}

We remark again that the algorithm in \cref{thm:factoring-N-given-residue} can be slightly tweaked for finding $r$-th power divisors in any residue class $t$ modulo $s$ (with the same running time).
The only required change is to define, in \cref{cl:new-polynomial}, $g(x) = x + s't + s'(P-\tilde{P})$ so that $s'p_0 = s'(sx_0 + t) \equiv x_0 + s't \pmod N$.

        \section{The Main Algorithm}
        \label{sec:main-algorithm}
        We now give an algorithm satisfying the claim of \cref{int:main-algorithm}.
As we mentioned, the algorithm is based on Algorithm 6.2 in \cite{hittmeir2018}, while incorporating several
simplifications and improvements.
\begin{algorithm}[H]
    \caption{Finding an Element with Large Order or a Nontrivial Factor of \(N\)}
    \label{alg:algorithm}
    \raggedright
    \Input An integer \(N\in \mathbb{N}\), and \(N^{1/6} \leq D \leq N\)\\
    \Output An element \(a \in \mathbb{Z}_N^*\) with \(\ord_N(a) > D\), or a nontrivial factor of \(N\), or ``$N$ is prime''.
    \noindent\rule{\textwidth}{0.4pt}
    \begin{algorithmic}[1]
        \State Apply~\cref{thm:strassens-method} with \(L \coloneqq 2D\) \label{step:strassen}
        \If{a nontrivial factor $K$ of \(N\) is found}
            \State Return $K$
        \EndIf
        \State Set \(M_1 = 1\) and \(a = 2\)
        \For{\(i = 1, 2, \dots\)} \label{step:main-loop}
            \While{\(a \nmid N\) and \(a^{M_i} \equiv 1 \mod N\)} \label{step:while-a-order-div-Me}
                \State \(a \gets a + 1\)
            \EndWhile
            \If{\(a \mid N\)}
                \State Return \(a\) as a nontrivial factor of \(N\) or, if \(a = N\), ``$N$ is prime''. \label{step:a-div-N}
            \EndIf
            \State Run the algorithm of~\cref{thm:order-calculation-alg} on \( D, N\) and $a$  \label{step:BSGS}
            \If{\(\ord_N(a)\) is not found}
                \State Return \(a\) as element with \(\ord_N(a) > D\) \label{step:large-order-a-found}
            \EndIf
            \State Set \(m_i = \ord_N(a)\) and factor $m_i$ using \cref{thm:strassens-method} (with $L \coloneqq \sqrt{m_i}$) \label{step:factor-me}
            \For{each prime \(q\) dividing \(m_i\)} \label{step:foreach-prime-dividing-me}
                \If{\(\gcd(N, a^{m_i/q} - 1) \neq 1\)} \label{step:compute-gcd}
                    \State Return \(\gcd(N, a^{m_i/q} - 1)\) as a nontrivial factor of \(N\) \label{step:gcd-divides-N}
                \EndIf
            \EndFor
            \State Set \(M_{i+1} \gets \operatorname{lcm}(M_i, m_i)\) \label{step:compute-lcm}
            \If{\(M_{i+1} \geq D\)} \label{step:Me-large}
                \If{\(\gcd(N, M_{i+1}) \neq 1\)} \label{step:compute-gcd-M}
                    \State Return \(\gcd(N, M_{i+1})\) as a nontrivial factor of \(N\) \label{step:gcd-divides-N-2}
                \EndIf
                \State Apply~\cref{thm:factoring-N-given-residue} with \(r\coloneq 1, s\coloneq M_{i+1}\) \label{step:factorization-algorithm}
                \State Return some nontrivial factor of \(N\) or ``$N$ is prime''. \label{step:return-factor}
            \EndIf
            \State \(a \gets a + 1\)
        \EndFor
    \end{algorithmic}
\end{algorithm}
\subsection{High-Level Overview of \cref{alg:algorithm}}
\label{subsec:high-level-overview}

The algorithm performs a brute force search for \(a = 2, 3, \ldots\) until either a large order element is found, or
we obtain enough information that enables us to find a factor of \(N\).
Each \(a\) is found to be either

\begin{itemize}
    \item a divisor of \(N\) (leading to the termination of the algorithm with a nontrivial factor of $N$), or
    \item an element with a large order (leading to the termination of the algorithm with an element of large order), or
    \item an element with a low order.
    In this case, the order of $a$ is used to gather information about
    the prime factors of $N$.
    After enough iterations, this enables us to deduce the residue of the prime factors of \(N\) modulo a large
    enough integer to utilize~\cref{thm:factoring-N-given-residue} and find a prime factor of \(N\).
\end{itemize}

\subsection{Main Changes from~\cite[Algorithm 6.2]{hittmeir2018}}
\label{subsec:main-changes}

By applying \cref{cl:consecutive}, the first \emph{while} loop in step \ref{step:while-a-order-div-Me} takes
\(O(\sqrt{M_i})\) iterations (rather than the \(O(M_i)\) given by \cite{hittmeir2018}).
This allows us to modify the condition in step \ref{step:Me-large} to \(M_i \geq D\) instead of \(M_i \geq \sqrt{D}\) without
impacting the overall running time.

By itself, this argument would have allowed us to take $D \ge N^{1/4}$ and the running time would have remained
$\tilde{O}(\sqrt{D})$.

We get further improvement by using \cref{thm:factoring-N-given-residue} to factor $N$ in step
\ref{step:factorization-algorithm} (\cite{hittmeir2018} uses a variant of \cref{thm:strassens-method} to find a factor
of $N$ is this final step).

\subsection{Proof of~\cref{int:main-algorithm}}

We are now ready to prove that \cref{alg:algorithm} satisfies the statements made in \cref{int:main-algorithm}.

\label{subsec:proof-of-main-algorithm}
\begin{proof}[Proof of \cref{int:main-algorithm}]
    ~\\
    \textbf{\textit{Correctness.}}
    The algorithm terminates at steps \ref{step:a-div-N}, \ref{step:large-order-a-found}, \ref{step:gcd-divides-N}
    and \ref{step:return-factor}.
    We will show that if the algorithm terminates at any of these steps, the output is correct.
    Furthermore, we will demonstrate that the algorithm must terminate.

    Step \ref{step:a-div-N} is straightforward, and step \ref{step:large-order-a-found} follows directly from
    \cref{thm:order-calculation-alg}.

    For step \ref{step:gcd-divides-N} we need to show that if \(\gcd(N, a^{m_i/q} - 1) \neq 1\), then it is a
    nontrivial factor of \(N\), i.e., not \(N\) itself.
    Since \(m_i\) is the order of \(a\) and \(m_i/q < m_i\), we have that \(a^{m_i/q} - 1 \not\equiv 0 \mod N\).

    If we reach step \ref{step:factorization-algorithm}, then in every loop iteration \(i\), steps
    \ref{step:foreach-prime-dividing-me}--\ref{step:gcd-divides-N} did not terminate.
    Using \cref{lem:order-all-primes}, we deduce that \(\ord_p(a) = m_i\) for all primes \(p\)
    dividing \(N\), for all \(i\).
    Because the order of every element modulo $p$ divides \(\varphi(p) = p - 1\), we have that
    \(m_i \mid p - 1\), i.e., \(p - 1 = k_i m_i\) for some \(k_i \in \N\).
    Since \(M_{i+1}\) is the least common multiple of all \(m_i\), there exists some \(k \in \N\) such that
    \(k M_{i+1} = p - 1\), and thus \(p \equiv 1 \mod M_{i+1}\) for all primes \(p\) dividing \(N\).
    Furthermore, we know that \(\gcd(N,M_{i+1})=1\) due to step~\ref{step:compute-gcd-M}.
    Since $M_{i+1} \ge D \ge N^{1/6}$, by~\cref{thm:factoring-N-given-residue}, we can factor \(N\) in step \ref{step:factorization-algorithm}.

    For termination it is enough to show that for all \(i\), \(M_{i+1} > M_i\).
    Assume by contradiction that there exists some \(i\) such that \(M_{i+1} \leq M_i\).
    Since \(M_{i+1} = \operatorname{lcm}(M_i, m_i)\), we have that \(m_i \mid M_i\).
    Since \(m_i\) is the order of \(a\), we have that \(a^{M_i} \equiv 1 \mod N\), contradicting the
    condition in step \ref{step:while-a-order-div-Me}.

    \textbf{\textit{Running time.}}
    Step \ref{step:strassen} runs in time \(D^{1/2+o(1)}\), by~\cref{thm:strassens-method}.

    The number of iterations of the loop in step \ref{step:main-loop} is $O(\log D)$.
    This is because for all \(i\), \(M_{i+1} > M_i\), and \(M_{i+1} = \operatorname{lcm}(M_i, m_i)\), so
    \(M_{i+1} \geq 2 M_i\).

    We will now show that each iteration of the loop in step \ref{step:main-loop} runs in time $D^{1/2+o(1)}$,
    thus completing the proof.

    By applying~\cref{cl:consecutive} with \(\ell \coloneqq D\), we observe that the while loop in step
    \ref{step:while-a-order-div-Me} executes at most \(O(\sqrt{D})\) iterations.
    Note that the condition that \(N\) has a prime factor \(p\) such that \(p > 2D\) is satisfied, otherwise a factor
    of \(N\) would be found in step \ref{step:strassen}.

    Computing the $M_i$-th power of $a$ in step \ref{step:while-a-order-div-Me} can be done in time polynomial in
    $\log D$ and $\log N$ by repeated squaring, as \(M_i\) is bounded by \(D\).\

    Computing if the order of \(a\) is greater than \(D\) in step \ref{step:BSGS} is done in \(\tilde{O}(\sqrt{D})\)
    time, by~\cref{thm:order-calculation-alg}.

    Computing the prime factorization of \(m_i\) in step \ref{step:factor-me} is done in \(\tilde{O}(D^{1/4})\) time,
    as \(m_i < D\).
    The number of distinct prime factors of \(m_i\) in step \ref{step:foreach-prime-dividing-me} is at most
    \(\log m_i\), so this step runs in time polynomial in $\log D$ and $\log N$.
    Computing \(\gcd(N, a^{m_i/q} - 1)\) in step \ref{step:compute-gcd} is done in time polynomial in $\log N$, using
    the Euclidean algorithm.

    Computing the least common multiple of \(M_i\) and \(m_i\) in step \ref{step:compute-lcm} can also be done in time
    polynomial in $\log N$, using the Euclidean algorithm.

    Computing \(\gcd(N, M_{i+1})\) in step \ref{step:compute-gcd-M} is done in time polynomial in $\log N$, using the
    Euclidean algorithm.

    Finally, applying \cref{thm:factoring-N-given-residue} in step \ref{step:factorization-algorithm} with
    \(r\coloneq 1, s\coloneq M_{i+1},\alpha\coloneq 1/6\) is done in time \( N^{1/4-1/6+o(1)} \) time,
    which is $D^{1/2+o(1)}$ since $D \ge N^{1/6}$.

    Therefore, the overall time complexity of the algorithm is $D^{1/2+o(1)}$.
\end{proof}

\subsection{The $r$-power divisor case}

Assuming $N$ has an $r$-power divisor $p$, we can slightly improve on the parameters of \cref{int:main-algorithm}.
Note that in this case, the factorization algorithm of
\cref{thm:factoring-N-given-residue} runs in time $N^{1/4r - \alpha + o(1)}$.
Therefore, we may set $\alpha=1/6r$, and the running time of \cref{alg:algorithm} is $D^{1/2+o(1)}$ even for
$D \ge N^{1/6r}$.
 
\begin{theorem}
\label{thm:r-power}
 There exists a deterministic algorithm that, given as input integers $N,r\in\mathbb{N}$ such that \(N\) has an
 $r$-th power divisor, and a target order $D \ge N^{1/6r}$, runs in time $D^{1/2+o(1)}$ and outputs either
 a nontrivial factor of $N$ or an element $a \in \Z_N^*$ of multiplicative order at least $D$.
\end{theorem}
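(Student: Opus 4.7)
The plan is to run \cref{alg:algorithm} essentially unchanged, modifying only step \ref{step:factorization-algorithm}: instead of invoking \cref{thm:factoring-N-given-residue} with parameter $r=1$, use the input parameter $r$. The motivation is that \cref{thm:factoring-N-given-residue} runs in time $N^{1/4r-\alpha+o(1)}$, which improves as $r$ grows; so if we are promised an $r$-th power divisor of $N$ we can afford to push $\alpha$ (hence $D$) down to $1/6r$ while still keeping the final step within the budget $D^{1/2+o(1)}$.

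For correctness, I would reuse the argument from the proof of \cref{int:main-algorithm} verbatim up to the final factorization step. If the algorithm reaches step \ref{step:factorization-algorithm} without terminating earlier, \cref{lem:order-all-primes} together with the residue accumulation implies that every prime factor $p$ of $N$ satisfies $p \equiv 1 \pmod{M_{i+1}}$ with $M_{i+1} \ge D \ge N^{1/6r}$. By assumption $N$ has an $r$-th power divisor, and any prime factor of such a divisor is itself a prime $p$ with $p^r \mid N$ and $p \equiv 1 \pmod{M_{i+1}}$. Applying \cref{thm:factoring-N-given-residue} with parameters $(r,\, s = M_{i+1},\, \alpha = 1/6r)$ returns all such $p$; each is a nontrivial factor of $N$ because $2 \le p \le N^{1/r} < N$.

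For the running time, every step other than the final factorization is unchanged from the analysis of \cref{int:main-algorithm} and remains $D^{1/2+o(1)}$. In particular, the initial call to \cref{thm:strassens-method} with $L = 2D$ runs in time $D^{1/2+o(1)}$ and guarantees that every remaining prime factor of $N$ exceeds $2D$, so \cref{cl:consecutive} with $\ell = D$ bounds the inner while loop by $O(\sqrt{D})$ iterations; this part of the analysis is insensitive to how small $D$ is relative to $N$. The only new calculation is that the final factorization step runs in time $N^{1/4r - 1/6r + o(1)} = N^{1/12r + o(1)}$, which is at most $D^{1/2+o(1)}$ precisely because $D \ge N^{1/6r}$ gives $D^{1/2} \ge N^{1/12r}$. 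There is no real obstacle, only the bookkeeping of substituting $\alpha = 1/6r$ for $\alpha = 1/6$ throughout.
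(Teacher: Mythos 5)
Your proof is correct and takes essentially the same approach as the paper: you run \cref{alg:algorithm} unchanged except for invoking \cref{thm:factoring-N-given-residue} in step~\ref{step:factorization-algorithm} with the input $r$ instead of $r=1$, set $\alpha=1/6r$, and observe that the final factorization step then runs in time $N^{1/4r-1/6r+o(1)} = N^{1/12r+o(1)} \le D^{1/2+o(1)}$, while all other steps are unaffected. The paper's own treatment is a terse two-sentence remark; your version spells out the same bookkeeping in more detail (including why a prime $p$ with $p^r \mid N$ and $p\equiv 1 \pmod{M_{i+1}}$ must exist and be found), which is a faithful expansion rather than a different argument.
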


        \section{Open Problems}
        \label{sec:open}
        Our work raises several natural questions that may lead to further progress in deterministic algorithms within
computational number theory.

The first promising direction is to use~\cref{alg:algorithm} as a subroutine in the factorization algorithm
of~\cite{Harvey21}.
One can search for an element \(a \in \mathbb{Z}_N^*\) of order greater than \(N^{1/5}\) in time \(N^{1/10+o(1)}\).
This naturally leads to the question of improving the remaining steps of the factorization algorithm to break the
$1/5$ barrier.

A related question concerns the analog of high order elements in prime fields and more specifically, finding a
generator in the cyclic group \(\mathbb{Z}_p^*\).
As mentioned in \cref{sec:introduction}, to the best of our knowledge, the best known deterministic algorithm for this
problem runs in time $p^{1/5+o(1)}$, and it is an interesting open problem to improve this.

Finally, another longstanding open problem is obtaining improved deterministic algorithms for the discrete logarithm
problem in \(\mathbb{Z}_p^*\).
The classical algorithm of Shanks~\cite{Shanks} solves the problem in \(O(\sqrt{p})\) time using the baby-step
giant-step method, and no significantly faster deterministic algorithm is currently known.
Even a modest improvement over this bound would be a significant progress.
        \bibliographystyle{alpha}
        \bibliography{refs}
    \end{document}